\def\BibTeX{{\rm B\kern-.05em{\sc i\kern-.025em b}\kern-.08em
    T\kern-.1667em\lower.7ex\hbox{E}\kern-.125emX}}
\tikzstyle{block}=[draw, fill=white, minimum size=2em]
\newtheorem{theorem}{Theorem}  
\newtheorem{lemma}{Lemma}
\newtheorem{remark}{Remark}
\newtheorem{proposition}{Proposition}
\newtheorem*{corollary*}{Corollary}
\newcommand*{\herm}{\mathsf{H}}
\renewcommand{\vec}[1]{\mathbf{#1}}
\newcommand{\eqdef}{\stackrel{\Delta}{=}}
\newcommand{\dvec}[1]{\bm{#1}} 		
\newcommand{\rvec}[1]{\bm{#1}} 		
\newcommand{\dmat}[1]{\bm{\mathsf{#1}}}  	
\newcommand{\rmat}[1]{\mathbb{#1}} 	
\newcommand{\mentry}[3]{\mathsf{#1}_{#2,#3}} 
\newcommand{\EE}{\bm{\mathcal{E}}}		
\newcommand{\PP}{\mathrm{Pr}} 			
\newcommand{\stdset}[1]{\mathbbmss{#1}}	
\newcommand{\set}[1]{\mathcal{#1}}		
\begin{document}

\title{Cooperative Multiple-Access Channels with Distributed State Information}

\author{Lorenzo~Miretti,~\IEEEmembership{Student~Member, IEEE,
} Mari~Kobayashi,~\IEEEmembership{Member, IEEE,
}\\David~Gesbert,~\IEEEmembership{Fellow, IEEE,
} Paul~de~Kerret~\IEEEmembership{Member, IEEE
}%
\thanks{Manuscript received November 19, 2019; revised June 7, 2020, October 30, 2020, and May 26,
2021; accepted June 1, 2021. This work was in part supported by the French-German Academy towards Industry 4.0 (SeCIF project) under Institut Mines-Telecom. L. Miretti, D. Gesbert, and P. de Kerret are also supported by the European Research Council under the European Union’s Horizon 2020 research and innovation program (Agreement no. 670896). This article was presented in part at the 2019 IEEE Information Theory Workshop. (\textit{Corresponding author: Lorenzo Miretti}). } 
\thanks{L. Miretti and D. Gesbert are with the Department of Communication Systems, EURECOM, Sophia Antipolis, France (e-mail: miretti@eurecom.fr;
gesbert@eurecom.fr).}
\thanks{M. Kobayashi is with the Department of Electrical and Computer Engineering, Technical University of Munich, Munich, Germany (e-mail:
mari.kobayashi@tum.de).}
\thanks{P. de Kerret was with the Department of Communication Systems, EURECOM, Sophia Antipolis, France; he is now with the Mantu Artificial Intelligence Laboratory (e-mail:
paul.dekerret@gmail.com)}}

\maketitle

\begin{abstract}
This paper studies a memoryless state-dependent multiple access channel (MAC) where two transmitters wish to convey a message to a receiver under the assumption of \textit{causal} and imperfect channel state information at transmitters (CSIT) and imperfect channel state information at receiver (CSIR). In order to emphasize the limitation of transmitter cooperation between physically distributed nodes, we focus on the so-called \textit{distributed} CSIT assumption, i.e., where each transmitter has its individual channel knowledge, while the message can be assumed to be partially or entirely shared a priori between transmitters by exploiting some on-board memory. Under this setup, the first part of the paper characterizes the common message capacity of the channel at hand for arbitrary CSIT and CSIR structure. The optimal scheme builds on Shannon strategies, i.e., optimal codes are constructed by letting the channel inputs be a function of current CSIT only. For a special case when CSIT is a deterministic function of CSIR, the considered scheme also achieves the capacity region of a common message and two private messages.
The second part addresses an important instance of the previous general result in a context of a cooperative multi-antenna Gaussian channel under i.i.d. fading operating in frequency-division duplex mode, such that CSIT is acquired via an explicit feedback of perfect CSIR. The capacity of the channel at hand is achieved by distributed linear precoding applied to Gaussian codes. Surprisingly, we demonstrate that it is suboptimal to send a number of data streams bounded by the number of transmit antennas as typically considered in a centralized CSIT setup. Finally, numerical examples are provided to evaluate the sum capacity of the binary MAC with binary states as well as the Gaussian MAC with i.i.d. fading.
\end{abstract}

\begin{IEEEkeywords}
Cooperative MAC, capacity, distributed CSIT, Shannon strategies, distributed precoding
\end{IEEEkeywords}

\section{Introduction}
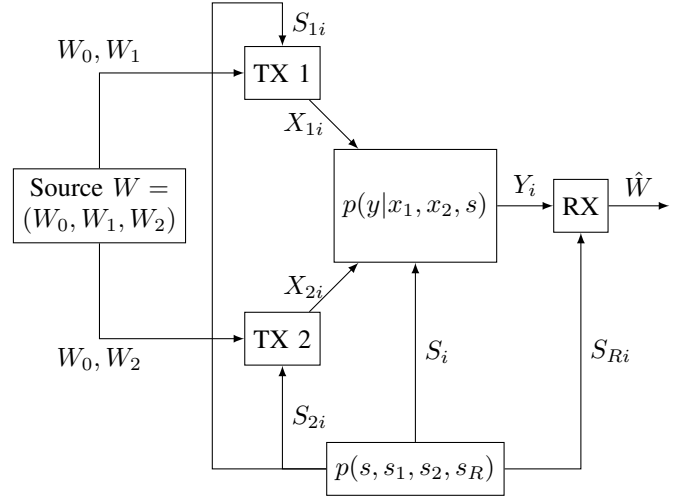
\begin{figure}[th!]
\centering
\begin{tikzpicture}[node distance=2.5cm,auto,>=latex]
		\node [block] (c) [draw,minimum size=1.5cm] {$p(y|x_1,x_2,s)$};	
    \node [block] (b1) [above left of=c] {TX 1};
		\node [block] (b2) [below left of=c] {TX 2};
    \node [block] (a) [left of=c, node distance=4.2cm, align=center] {Source $W = $\\$(W_0,W_1,W_2)$};
		\node [block] (d) [right of=c,  node distance=2.2cm] {RX};
    \node (end) [right of=d, node distance=1.3cm]{};
		\node (x1) [below of=c,coordinate] {};
		\node [block] (f) [below of = x1,node distance=1cm] {$p(s,s_1,s_2,s_R)$};

		\draw[->] (a) |- node [swap]{$W_0,W_2$} (b2);
		\draw[->] (a) |- node {$W_0,W_1$} (b1);
    \draw[->] (b1) -- node [left]{$X_{1i}$} (c);
		\draw[->] (b2) -- node [left]{$X_{2i}$} (c);
    \draw[->] (c) -- node {$Y_i$} (d);
		\draw[->] (d) -- node {$\hat{W}$} (end);
		\draw[->] (f) --  ++(-2.7,0) -- ++(0,6.2) -| node [near end] {$S_{1i}$} (b1);
		\draw[->] (f) -| node [near end, swap] {$S_{2i}$} (b2);
		\draw[->] (f) -- node [swap]{$S_{i}$} (c);
		\draw[->] (f) -| node [near end, swap] {$S_{Ri}$} (d);
\end{tikzpicture}
\caption{Cooperative multiple-access channel with causal and distributed CSIT $S_1,S_2$ and imperfect CSIR $S_R$.} \label{fig:MAC}
\end{figure}
\IEEEPARstart{W}{ireless} communication networks can substantially benefit from transmitter cooperation via joint processing among multiple transmitters (TX), as it enables to mitigate interference and enhance the network performance. 
Although the benefits of TX cooperation have been identified in terms of coverage, throughput scaling, spectral efficiency, and energy efficiency, most of the existing cooperative schemes and performance analysis build on the common assumption that perfect, or at least perfectly shared, channel state information at the transmitters (CSIT), referred to {\it centralized} CSIT,  is available (see e.g. \cite{gesbert2010multi,simeone2012cooperative,quek2017cloud} and references therein). 
While such an assumption is convenient for analysis, it is however being challenged in a number of practical wireless scenarios. In fact, the acquisition of centralized CSIT always entails direct communication between transmitters or feedback from the receivers so that a given transmitter can collect the CSIT of other transmitters. This inevitably induces impairments and delays, which can be represented as a transmitter-specific distortion added to the channel state information. In order to capture such a limitation, we focus on the so-called \textit{distributed} CSIT such that each transmitter has its own channel knowledge. On the other hand, we assume that messages are partially or entirely shared between transmitters prior to the actual data transmission. This assumption is justified for instance in cache-aided networks where parts of delay-tolerant web content (video files) can be pre-fetched at transmitters typically during off-peak hours \cite{maddah2014fundamental,naderializadeh2019cache}. More generally, our setup can be thought as representing service situations where the CSIT time sensitivity is high in relation to that of the data contents.

By taking into account both practical CSIT limitation and inherent message-sharing opportunity, we study a state-dependent multiple access channel (MAC) illustrated in Fig.~\ref{fig:MAC}. Namely, two transmitters with respective state knowledge $S_1, S_2$ wish to cooperatively convey a message $W$ through inputs $X_{1}, X_2$ to a single receiver (RX) with state knowledge $S_R$.  
We do not consider direct communication links between transmitters that enable further online interactions such as conferencing \cite{WillemsThesis,permuter2011message}. Rather,  we aim to design the transmission strategy for a predefined CSI distribution mechanism described by the joint distribution of $(S,S_1,S_2,S_R)$ and a given message cooperation defined by the rate of $(W_0,W_1,W_2)$. More precisely, the {\it a priori} cooperation among the TXs is modeled by the following two components: 
\begin{itemize}
    \item \textit{State} cooperation, modeled by the distribution  of the CSIT $(S_1,S_2)$. Perfect state cooperation corresponds to a centralized CSIT configuration where the TXs share the same state information, i.e., $S_1=S_2$ (note that this does not necessarily correspond to perfect CSIT).
    \item \textit{Message} cooperation, modeled by splitting $W$ into $3$ sub-messages $(W_0,W_1,W_2)$, where $W_0$ is the portion of $W$ available at both TXs, and where $W_k$, $k=1,2,$ is the portion available only at TX~$k$. Perfect message cooperation corresponds to $W_0 = W$ or $W_1=W_2=\emptyset$.
\end{itemize}
Distributed CSIT gives rise to many interesting, yet challenging, problems because TXs must cooperate on the basis of uncertainties about each other's state information. There are roughly two classes of works. The first class 
focuses on signal processing methods \cite{gesbert2018team} such as particular precoders optimization, and asymptotic ergodic rate analysis in the regime of high signal-to-noise ratio \cite{dekerret2012degrees,bazco2019achieving} for cooperative multi-user networks with interference. The second class is based on the information theoretic models. These include the MAC with partial CSIT $S_1, S_2$ and full CSIR $S=S_R=(S_1, S_2)$\cite{hwang2007multiple}, the slow-fading Gaussian MAC with partial CSIT $S_1, S_2$ \cite{hwang2007multiple}, the MAC with conferencing encoders under noncausal partial CSIT $S_1, S_2$ and full CSIR $S=S_R=(S_1, S_2)$ \cite{permuter2011message}, the MAC with partial and strictly causal state information $S_k$ at TX $k$ and no CSIR \cite{li2012multiple}, and the cooperative MAC with non-causal CSIT at one TX and strictly causal at the other \cite{zaidi2013capacity}. Although useful system design and performance analysis are obtained from these two frameworks individually, they are disconnected each other in the sense that insights obtained from one class cannot be useful for another. 

Motivated by such an observation, we wish to close the gap between these two approaches by designing a simple yet information theoretically optimal cooperative scheme under distributed channel state information. To the best of our knowledge, such a result was not reported before. In particular, we study the capacity of a common message and the capacity region of three messages over a memoryless state-dependent MAC with causal distributed CSIT. Before summarizing the main contributions of the current work in Section \ref{ssec:results}, we first review the existing results on coding with causal CSIT under various network models.

\subsection{Coding with causal CSIT} \label{ssec:literature}
In \cite{shannon1958channels}, Shannon characterized the capacity of a memoryless state-dependent point-to-point channel with causal state knowledge at the transmitter $S=S_T\in \mathcal{S}$ and no CSIR $S_R=\emptyset$. The capacity of the channel at hand can be alternatively given by \cite{el2011network}
\begin{equation*}\label{eq:shannon_eq}
C = \max_{p(u), \; x=f(u,s)}I(U;Y),
\end{equation*}
where $U\in \mathcal{U}$ is an auxiliary random variable of finite cardinality and independent of $S$, and $f$ is a deterministic function. 
Notice that $U$ can be seen as an index for the family of functions $\mathcal{S} \to \mathcal{X}$, also known as \textit{Shannon strategies}.
This result has a practical impact to the design of modern wireless communication systems as it suggests that capacity is achieved by encoding the message through a function $f$ 
depending only on the \textit{current} CSIT.

In the following, we briefly summarize the existing results exploiting Shannon strategies. Shannon strategies were generalized to more general setups with imperfect CSIT $S_T$ and imperfect CSIR $S_R$ \cite{caire1999capacity}, and to particular cases of state process with memory in \cite{caire1999capacity}. 
Shannon strategies were also extended to more general network models, including degraded broadcast channels \cite{steinberg2005coding,sigurjonsson2005multiple},  
degraded relay channels \cite{sigurjonsson2005multiple, skoglund2008relay}, as well as multiple access channels \cite{das2002capacities,steinberg2005coding,
sigurjonsson2005multiple,jafar2006capacity,como2011capacity,como2013noisy,lapidoth2013common,
lapidoth2013double,baruch2008cooperative,zaidi2013capacity,zaidi2014cooperative,li2012multiple}.
By focusing on the MAC literature, the capacity region of the state-dependent MAC was studied by Das and Narayan in \cite{das2002capacities}, where multi-letter formulas are given for very general channel and CSI models. Unfortunately the multi-letter expressions provide very little insights and cannot be easily computed. In contrast, single-letter expressions on an achievable rate region $(R_1, R_2)$ have been derived for the case of a common state $S_1=S_2=S$ in \cite{sigurjonsson2005multiple}.
When the two states $(S_1, S_2)$ are independent, Shannon strategies are proved to achieve the sum capacity \cite{jafar2006capacity}. In practical wireless 
systems operating in frequency division duplexing (FDD) mode,  it is typical to assume that the CSIT is a deterministic function of the CSIR as CSIT is acquired as an explicit feedback from the receiver. 
Under this condition, the full capacity region of the state dependent MAC has been characterized in \cite{jafar2006capacity} for independent states and 
then generalized to arbitrarily correlated states in \cite{como2011capacity}. For the case of degraded message sets as a special case of Fig.~\ref{fig:MAC} when $W_2=\emptyset$, the capacity region
has been characterized for the case of one-sided CSIT ($S_2=\emptyset$) in \cite[Section IV]{baruch2008cooperative}.

Interestingly, Shannon strategies are known to be suboptimal in general state-dependent MAC. In particular, Lapidoth and Steinberg demonstrated that Shannon strategies
fail to achieve some rate pairs in the state-dependent MAC with no CSIR for the case of common state \cite{lapidoth2013common} and the case of independent states \cite{lapidoth2013double}, causally or strictly causally available at the encoders. This is because block-Markov encoding can help the TXs to provide some CSIR at the RX by compressing and sending past CSIT cooperatively \cite{lapidoth2013common} or non-cooperatively\cite{lapidoth2013double}. Clearly, this strategy is not useful when the CSIT is already available at the RX, as in \cite{como2011capacity} and in parts of this work. The scheme proposed in \cite{lapidoth2013common,lapidoth2013double} have been further generalized in \cite{li2012multiple}, where the TXs compress and send the past codewords along with past channel states. The idea of sending the past codewords via block-Markov encoding has been proposed for the MAC with feedback \cite{cover1981achievable,willems1982feedback,WillemsThesis}, while the idea of sending the past state together with new messages was also considered in the simultaneous state and data communication (see e.g. \cite{choudhuri2013causal} and references therein) and in the cooperative MAC with strictly causal CSIT \cite{zaidi2014cooperative}.

\subsection{Contributions and Paper Outline}
\label{ssec:results}
This paper provides the following contributions: 
\begin{enumerate}
    \item We demonstrate that the common message capacity of the memoryless state-dependent MAC under distributed CSIT is achieved by Shannon strategies for any CSI distribution $p(s,s_1,s_2,s_R)$ in Theorem \ref{th:C_common}. The exact characterization of the common message capacity complements the existing results in \cite{jafar2006capacity,lapidoth2013double,lapidoth2013common}, which consider only two private messages, the results in \cite{baruch2008cooperative} which assumes $S_2 = \emptyset$ and $W_2 = \emptyset$, and it extends the single-user results in \cite{shannon1958channels,caire1999capacity} to two distributed TXs.
    \item For the special case when the CSIT of each TX $k$ is a deterministic function $q_k$ of the CSIR, i.e., $S_k = q_k(S_R)$ for $k=1,2$, we prove that Shannon strategies achieve the full capacity region on three messages $(W_0, W_1, W_2)$ in Theorem \ref{th:MAC_det}. This extends the existing result \cite{como2011capacity} to the case when a common message is present. 
    The contribution of Theorem \ref{th:MAC_det} lies in our converse proof based on a standard information inequality chain, which overcomes the technique used in  \cite{jafar2006capacity} restricted to the case of independent states while significantly simplifying the approach in \cite{como2011capacity}.    
    \item By specializing the model of Theorem 2, we establish the common message capacity of the multiple-input multiple-output (MIMO) Gaussian fading channel operating in frequency-division-duplexing (FDD) mode in Theorem~\ref{th:C_distributed_MIMO}. We demonstrate that distributed linear precoding over Gaussian codewords based on Shannon strategies is optimal. The difference with respect to centralized CSIT is that each TX shall choose its precoding vector as a function of its channel knowledge, rather than global CSIT. As a key ingredient for the achievability proof, we allow the number of data streams conveying $W_0$ to grow large up to a given upper bound that depends on the CSIT cardinalities. Furthermore, as a non-trivial extension, Theorem \ref{th:C_region_MIMO} characterizes the entire capacity region of the aforementioned channel. The converse proof exploits the underlying channel structure and functional dependencies, while achievability builds on superposition encoding and the distributed precoding technique developed for Theorem \ref{th:C_distributed_MIMO}.
    
\item By letting the number of precoded data streams be the maximum dimension, in Proposition \ref{cor:computation_C} and Proposition  \ref{cor:computation_wsr} we prove that the optimal distributed precoding design, belonging to the well-known class of difficult problems called \textit{team decision} problems \cite{gesbert2018team}, can be cast into a convex form. Moreover, in Proposition~\ref{cor:necessary} and Proposition~\ref{prop:tighter_bound} we provide a more in-depth analysis on the optimal number of common data streams. Surprisingly, we prove that the common wisdom of limiting the number of precoded data streams by the number of transmit antennas is strictly suboptimal under a distributed CSIT setup. This is in sharp contrast to the case of centralized CSIT.   
\end{enumerate}

The paper is organized as follows. In Section \ref{sec:MAC}, we provide the formal system model and the main results for general MAC with distributed CSIT. Section \ref{sec:MIMO} presents the results for the specific cooperative MIMO MAC at hand. The insights given by the above sections are then further illustrated via numerical examples in Section \ref{sec:examples}. For readability purposes, most of the proofs are moved to appendices.

\section{Cooperative Multiple-Access Channels with Causal and Distributed CSIT}\label{sec:MAC}
This section first provides the general channel model and the basic definitions adopted throughout this work. Then, we present 
general results on the cooperative multiple access channel (MAC) with causal and distributed CSIT illustrated in Fig.~\ref{fig:MAC}. Hereafter, we follow the standard notation  of \cite{el2011network}. 

\subsection{System Model and Problem Statement}
\label{ssec:system_model}
\paragraph{Channel Model}
Consider the state-dependent MAC in Fig.~\ref{fig:MAC}, with a common message $W_0$, two private messages $W_1,W_2$, inputs $X_{1} \in \set{X}_1$, $X_{2} \in \set{X}_2$, output $Y \in \set{Y}$, state $S\in \set{S}$, memoryless channel law $p(y|x_1,x_2,s)$, \textit{distributed} CSIT $(S_1,S_2) \in \set{S}_1 \times \set{S}_2$, and imperfect CSIR $S_R \in \set{S}_R$. The sequence of tuples $\{(S_i,S_{1i},S_{2i},S_{Ri})\}_{i=1}^n$ is assumed to follow a generic memoryless law $p(s,s_1,s_2,s_R)$.  An $n$-sequence of inputs, output and states is then governed by 
\begin{align*}
    p(y^n|x_1^n,x_2^n,s^n) &= \prod_{i=1}^np(y_i|x_{1i},x_{2i},s_i), \\
    p(s^n,s_1^n,s_2^n,s_R^n) &= \prod_{i=1}^np(s_i,s_{1i},s_{2i},s_{Ri}).
\end{align*}
We assume that three messages $W_0,W_1,W_2$ are independently and uniformly distributed over the sets $\set{W}_j \eqdef \{1,\ldots,2^{\lceil nR_j \rceil}\}$, $j=0,1,2$, where $R_j \geq 0$ is the rate of the message $W_j$. All alphabets are assumed to be finite, unless otherwise stated.

\paragraph{Encoding and Decoding} A block code $(2^{nR_0},2^{nR_1},2^{nR_2},n)$ of length $n$ with causal CSIT is defined by a set of encoding functions 
\begin{equation*}
    \phi_{ki}: \set{W}_0\times \set{W}_k\times \set{S}_k^i\to \set{X}_k, \quad k=1,2, \quad i=1,\ldots,n,
\end{equation*} 
yielding the transmitted symbols $x_{ki} = \phi_{ki}(w_0,w_k,s_k^i)$, as well as a decoding function 
\begin{equation*}
    \psi: \set{Y}^n\times \set{S}_R^n\to \set{W}_0\times \set{W}_1\times \set{W}_2,
\end{equation*} 
yielding the decoded messages $(\hat{w}_0,\hat{w}_1,\hat{w}_2) = \psi(y^n,s_R^n)$. Each encoder $k=1,2$ is subject to an average input cost constraint
\begin{equation*}
\EE\left[b^n(X_k^n)\right] \leq P_k,\quad b^n(x_k^n) \eqdef \dfrac{1}{n}\sum_{i=1}^n b_k(x_{ki}), 
\end{equation*}
where $b_k: \set{X}_k\to \mathbb{R}_+$ is a single-letter cost function upper-bounded by $b_{k,\max}$.
A rate-cost tuple $(R_0,R_1,R_2,P_1,P_2)$ is said to be achievable if, for the considered channel, there exists a family of block codes of length $n$ defined as before such that the average probability of error satisfies $P^{(n)}_e \eqdef \PP((\hat{W}_0,\hat{W}_1,\hat{W}_2)\neq(W_0,W_1,W_2))\rightarrow 0$ as $n\to \infty$. 

\paragraph{Figure of Merit}
For a given cost pair $(P_1,P_2)$, the closure of the set of all achievable rates $(R_0,R_1,R_2)$ is the capacity-cost region $\mathscr{C}(P_1,P_2)$ of the considered channel. In this work, we are mostly interested in two operating points in $\mathscr{C}(P_1,P_2)$. Namely, the common message capacity, defined by  
 $C_{0}(P_1,P_2) \eqdef \max\{R_0 \in \mathscr{C}(P_1,P_2)\}$ and the sum capacity $C_{\mathrm{sum}}(P_1,P_2) \eqdef \max\{R_{\rm sum}: (R_0,R_1,R_2)\in \mathscr{C}(P_1,P_2)\}$, where $R_{\mathrm{sum}} \eqdef R_0 + R_1 + R_2$ denotes the rate of the aggregate message $W\eqdef(W_0,W_1,W_2)$ in Fig.~\ref{fig:MAC}. 

\subsection{General Results}
As a non-trivial extension of \cite[Corollary 3]{baruch2008cooperative} with the one-sided CSIT ($S_2 = \emptyset$) and of \cite{shannon1958channels,caire1999capacity} for the centralized CSIT case $S_1 = S_2$ to a general CSI structure, we provide the main result of this section. 
\begin{theorem}\label{th:C_common}
The common message capacity of the channel in Fig.~\ref{fig:MAC} is given by
\begin{equation}\label{eq:C0}
    C_0(P_1,P_2) = \max_{\substack{p(u)\\x_k = f_k(u,s_k)\\\EE[b_k(X_k)]\leq P_k \; \forall k}}I(U;Y|S_R),
\end{equation}
where $U \in \set{U}$ is an auxiliary random variable of finite cardinality, independent of $(S,S_1,S_2,S_R)$, and where $f_k$ is a deterministic functions $\set{U}\times \set{S}_k\to \set{X}_k$ for $k=1,2$. 
\end{theorem}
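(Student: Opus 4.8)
The plan is to prove the equality in \eqref{eq:C0} through a matching achievability and converse, exploiting the key structural feature of the common-message setting: since $W_0$ is available at both encoders, a \emph{single} auxiliary $U$ can be shared by the two transmitters, which is exactly what the Shannon-strategy form $x_k = f_k(u,s_k)$ encodes.

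For achievability I would fix an admissible $p(u)$ and strategy pair $f_1,f_2$ and observe that the code induces a memoryless point-to-point channel from the input $U$ to the output pair $(Y,S_R)$ with law $p(y,s_R\mid u)=\sum_{s,s_1,s_2}p(s,s_1,s_2,s_R)\,p\big(y\mid f_1(u,s_1),f_2(u,s_2),s\big)$; this is memoryless precisely because the state tuple is i.i.d.\ and the channel is memoryless. I would then draw one codebook of $2^{nR_0}$ sequences $u^n(w_0)$ i.i.d.\ from $p(u)$, shared by both transmitters, and let transmitter $k$ send $x_{ki}=f_k(u_i(w_0),s_{ki})$, which is a legitimate causal encoder as it uses only the current local state $s_{ki}$. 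The receiver, holding both $y^n$ and $s_R^n$, runs joint-typicality decoding on the induced channel, so the standard channel-coding argument gives reliability whenever $R_0<I(U;Y,S_R)=I(U;Y\mid S_R)$, the last step following from $U\perp S_R$. The cost constraints follow from the law of large numbers and the boundedness of $b_k$.

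For the converse I would start from Fano's inequality and the independence of the messages from the states to obtain, for \emph{any} achievable triple, $nR_0\le I(W_0;Y^n\mid S_R^n)+n\epsilon_n$, where $I(W_0;Y^n\mid S_R^n)=\sum_{i=1}^n I(W_0;Y_i\mid Y^{i-1},S_R^n)$. The crux is to single-letterize each term with the auxiliary $U_i\eqdef(W_0,W_1,W_2,S_1^{i-1},S_2^{i-1},Y^{i-1},S_R^{i-1},S_{R,i+1}^n)$. This choice does two things at once: (i) since $U_i$ contains $(W_0,W_k,S_k^{i-1})$ for each $k$, each input obeys the distributed functional form $X_{ki}=\phi_{ki}(W_0,W_k,S_k^i)=f_{ki}(U_i,S_{ki})$ for a deterministic $f_{ki}$; and (ii) since $U_i$ absorbs the leftover receiver side information $(Y^{i-1},S_R^{i-1},S_{R,i+1}^n)$, the chain rule and nonnegativity of mutual information give $I(W_0;Y_i\mid Y^{i-1},S_R^n)\le I(U_i;Y_i\mid S_{Ri})$, collapsing the conditioning from the entire $S_R^n$ to the single letter $S_{Ri}$. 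I would then verify the essential independence $U_i\perp(S_i,S_{1i},S_{2i},S_{Ri})$, which holds because every component of $U_i$ is a function of the messages and of quantities indexed by times $\neq i$, all independent of the time-$i$ state tuple by the i.i.d.\ assumption. A time-sharing variable $Q$ uniform on $\{1,\dots,n\}$, absorbed into $U\eqdef(Q,U_Q)$, converts $\frac1n\sum_i I(U_i;Y_i\mid S_{Ri})$ into $I(U;Y\mid S_R)$ under a single-letter law of the required form, the averaged code cost bounds $\EE[b_k(X_k)]\le P_k$, and a Carathéodory/support-lemma argument caps $|\set{U}|$, which is legitimate since all alphabets are finite.

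I expect the main obstacle to be this converse auxiliary construction, specifically reconciling the distributed encoding constraint with the shape of the single-letter bound. The delicate point is that no single transmitter knows both $S_1^{i-1}$ and $S_2^{i-1}$, yet the converse auxiliary $U_i$ — a purely analytical genie variable — must contain both so that $X_{1i}$ and $X_{2i}$ become functions of the \emph{same} $U_i$; simultaneously $U_i$ must swallow enough receiver information to reduce $S_R^n$ to $S_{Ri}$ without destroying its independence from the current state. Checking that these competing requirements are jointly met by the proposed $U_i$, and that the resulting factorization lands exactly in the feasible set of \eqref{eq:C0}, is the step demanding the most care.
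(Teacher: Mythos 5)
Your proposal is correct and follows essentially the same route as the paper: achievability via Shannon strategies inducing an equivalent state-less channel with output $(Y,S_R)$, and a converse built on Fano's inequality, an auxiliary $U_i$ containing \emph{both} past CSIT sequences $(S_1^{i-1},S_2^{i-1})$ so that each $X_{ki}$ becomes a function of the common $U_i$ and the local $S_{ki}$, followed by time sharing. The only (immaterial) difference is bookkeeping in the converse: you absorb $Y^{i-1}$ and the non-current CSIR into $U_i$ and invoke monotonicity of conditioning, whereas the paper keeps $U_i=(W_0,S_1^{i-1},S_2^{i-1})$ and discards $\tilde{Y}^{i-1}=(Y^{i-1},S_R^{i-1})$ via the Markov chain $\tilde{Y}^{i-1}\to(W_0,S_1^{i-1},S_2^{i-1})\to\tilde{Y}_i$; both choices preserve the required independence from the time-$i$ state tuple and yield the same single-letter bound.
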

\begin{IEEEproof}
The proof is given in Appendix \ref{sec:proof_th1}.
\end{IEEEproof}
\begin{remark}
By replacing $W_0$ by the triple $(W_0,W_1,W_2)$ in the converse proof of Theorem \ref{th:C_common}, it immediately follows that $C_{\mathrm{sum}}(P_1,P_2) \equiv C_0(P_1,P_2)$. 
\end{remark}

The main finding of Theorem \ref{th:C_common} is that the common message capacity (or equivalently, the sum capacity) can be achieved by Shannon strategies, i.e., by coding over the current CSIT $S_{1i},S_{2i}$ only while neglecting the past CSIT sequences. In fact, the converse proof also shows that providing the strictly causal sequence $(S_1^{i-1},S_2^{i-1})$ to both TXs does not increase the common message capacity. 

It is also worth emphasizing the difference with respect to the centralized CSIT where 
both TXs share $S_1 =S_2 \eqdef S_T$. In such case, by omitting for simplicity the input cost constraints, we recover in fact the classical result of \cite{shannon1958channels,caire1999capacity}
\begin{equation*}
    C_0 = \max_{\substack{p(u)\\(x_1,x_2) = f(u,s_T)}}I(U;Y|S_R).
\end{equation*}
Although Shannon strategies are optimal in both both distributed and centralized CSIT cases, the distributed CSIT assumption imposes the design of two different functions $f_1,f_2$ depending on the local CSIT only each, rather than a single $f$ as in the (virtually) centralized case.  

In order to prove the achievability part of Theorem \ref{th:C_common}, we obtain an achievable region for the MAC with a common message and two private messages as a byproduct. Specifically, we obtain the following result by combining Slepian-Wolf coding \cite{slepian1973coding} for the state-less MAC with common message and Shannon strategies \cite{shannon1958channels}. 
\begin{lemma}\label{lem:MAC}
For the channel in Fig.~\ref{fig:MAC}, $\mathscr{C}(P_1,P_2)$ includes the convex hull of all rate triples $(R_0,R_1,R_2)$ such that
\begin{align*}
	\begin{split}
    R_1 &\leq I(U_1;Y|U_2,U_0,S_R),\\
    R_2 &\leq I(U_2;Y|U_1,U_0,S_R),\\
    R_1+R_2 &\leq I(U_1,U_2;Y|U_0,S_R),\\
    R_0+R_1+R_2 &\leq I(U_1,U_2;Y|S_R),
    \end{split}
\end{align*}
for some auxiliary variables $(U_0,U_1,U_2) \in \set{U}_0\times \set{U}_1 \times \set{U}_2$ of finite cardinality, independent of the CSI $(S,S_1,S_2,S_R)$, with pmf factorizing as $p(u_0)p(u_1|u_0)p(u_2|u_0)$, and for some deterministic functions $f_k:\set{U}_k\times \set{S}_k\to\set{X}_k$, $x_k = f_k(u_k,s_k)$, satisfying $\EE[b_k(X_k)]\leq P_k$ for $k=1,2$.
\end{lemma}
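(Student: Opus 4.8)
The plan is to reduce the state-dependent MAC equipped with Shannon strategies to a \emph{state-less} MAC with a common message, for which the achievable region is classical, and then to translate the resulting bounds back into the claimed form. The key observation is that, once we fix a pmf $p(u_0)p(u_1|u_0)p(u_2|u_0)$ and the deterministic maps $f_1,f_2$, the pair $(U_1,U_2)$ induces a memoryless ``super-channel'' to the receiver. Indeed, since $(U_0,U_1,U_2)$ is independent of the i.i.d.\ state process and the physical channel is memoryless, the effective transition law
\begin{equation*}
p(y,s_R\mid u_1,u_2)=\sum_{s,s_1,s_2}p(s,s_1,s_2,s_R)\,p\big(y\mid f_1(u_1,s_1),f_2(u_2,s_2),s\big)
\end{equation*}
is well defined per symbol and factorizes across time. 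I would treat $(Y,S_R)$ as the output of this induced channel, so that the receiver side information $S_R$ is absorbed into the channel output.

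With this reduction in place, I would run the standard superposition/Slepian--Wolf scheme for the MAC with common message. First I would generate $2^{nR_0}$ cloud centers $u_0^n(w_0)$ i.i.d.\ $\sim p(u_0)$, and for each of them generate satellite codewords $u_1^n(w_0,w_1)$ and $u_2^n(w_0,w_2)$ conditionally i.i.d.\ $\sim \prod_i p(u_{ki}\mid u_{0i})$. Encoder $k$, knowing $(w_0,w_k)$ and observing its state causally, transmits $x_{ki}=f_k(u_{ki}(w_0,w_k),s_{ki})$. Crucially, this is a legitimate causal code: the input at time $i$ depends on the precomputed codeword symbol and on the \emph{current} state $s_{ki}$ only, so causality is automatic and no memory of past states is needed. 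Decoding is by joint typicality of $(u_0^n,u_1^n,u_2^n,y^n,s_R^n)$.

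The error analysis is the routine one for the common-message MAC. Partitioning the error events according to which of $(w_0,w_1,w_2)$ are decoded incorrectly and applying the packing lemma to the induced memoryless channel yields the four constraints on $R_1$, $R_2$, $R_1+R_2$, and $R_0+R_1+R_2$ with mutual information terms conditioned on the full output $(Y,S_R)$. I would then simplify these using the independence $(U_0,U_1,U_2)\perp S_R$: for instance $I(U_1;Y,S_R\mid U_2,U_0)=I(U_1;Y\mid U_2,U_0,S_R)$ because $I(U_1;S_R\mid U_2,U_0)=0$, and likewise for the remaining terms, recovering the stated region. The input-cost constraint is handled in the usual way: since $\EE[b_k(X_k)]=\EE[b_k(f_k(U_k,S_k))]\le P_k$ by hypothesis, a standard expurgation (or restriction to cost-typical codewords) preserves the rates while meeting the per-block cost with vanishing loss.

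The main obstacle---though conceptually mild---is justifying the reduction rigorously: verifying that $p(y,s_R\mid u_1,u_2)$ is genuinely memoryless (which hinges on the i.i.d.\ state assumption together with $(U_0,U_1,U_2)\perp(S,S_1,S_2,S_R)$) and that the receiver's use of $S_R$ is faithfully captured by treating $(Y,S_R)$ as the channel output. Once this is established, finiteness of the auxiliary alphabets follows from the finite input and state alphabets, and the convex hull is obtained by the usual time-sharing argument.
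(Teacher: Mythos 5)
Your proposal is correct and follows essentially the same route as the paper: the authors likewise invoke Shannon's ``physical device'' argument to reduce the problem to a state-less memoryless MAC with inputs $U_1,U_2$ and output $(Y,S_R)$, and then apply Slepian--Wolf superposition coding for the MAC with a common message, with the independence of $(U_0,U_1,U_2)$ from the CSI yielding the conditioning on $S_R$ in the rate bounds. Your write-up simply makes explicit the induced channel law, the causality check, and the cost-constraint expurgation that the paper leaves as ``standard arguments.''
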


It is well known that Shannon strategies, i.e., the scheme of Lemma \ref{lem:MAC}, fail to achieve $\mathscr{C}(P_1,P_2)$ for general $p(s,s_1,s_2,s_R)$, as observed for a special case of a common CSIT $S_1=S_2=S$ and no CSIR $S_R = \emptyset$ in \cite{lapidoth2013common}. This is because block-Markov encoding enables two TXs to compress past state information and send it as a common message to provide possibly useful CSIR to the RX. Nevertheless, Theorem \ref{th:C_common} shows that such a scheme based on block-Markov encoding is not necessary for achieving the sum capacity of the considered setup. Namely, provided that $R_0$ is large enough (in the worst case, equal to $C_{\mathrm{sum}}(P_1,P_2)$), the sum capacity is indeed achievable by the scheme in Lemma \ref{lem:MAC}. 

In the following we focus on the particular case where each CSIT is a deterministic function of CSIR. This assumption is highly relevant to frequency-division duplex (FDD) systems, where each transmitter acquires channel knowledge via an explicit quantized feedback from the receiver.
As a straightforward extension of \cite[Theorem 4]{como2011capacity} and \cite[Theorem 5]{jafar2006capacity} restricted to two private messages\footnote{\cite[Theorem 4]{como2011capacity} generalized the case of independent states $(S_1, S_2)$ to arbitrarily correlated states $(S_1, S_2)$.}, we characterize the capacity region for three messages as follows. 
\begin{theorem}\label{th:MAC_det}
By assuming that $S_1 = q_1(S_R)$ and $S_2 = q_2(S_R)$, where $q_1,q_2$ are two deterministic functions, the capacity region $\mathscr{C}(P_1,P_2)$ of the channel in Fig.~\ref{fig:MAC} is given by the convex hull of all rate triples satisfying
\begin{align}
    \begin{split}\label{eq:MAC_det}
    R_1 &\leq I(X_1;Y|X_2,U,S_R),\\
    R_2 &\leq I(X_2;Y|X_1,U,S_R),\\
    R_1+R_2 &\leq I(X_1,X_2;Y|U,S_R),\\
    R_0+R_1+R_2 &\leq I(X_1,X_2;Y|S_R),
    \end{split}
\end{align}
for some pmf $p(x_1|s_1,u)p(x_2|s_2,u)p(u)$, where $U\in \set{U}$ is an auxiliary variable of finite cardinality and independent of $(S,S_1,S_2,S_R)$, satisfying $\EE[b_k(X_k)]\leq P_k$ for $k=1,2$.
\end{theorem}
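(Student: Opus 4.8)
My plan is to establish both achievability and converse. For achievability, I would specialize Lemma \ref{lem:MAC} to the present setting. The key observation is that when $S_k = q_k(S_R)$, the receiver knows $S_R$ and hence knows $(S_1,S_2)$; this decouples the Shannon-strategy auxiliaries from the inputs in a favorable way. Concretely, in Lemma \ref{lem:MAC} set $U_0 = U$ and, rather than keeping the private auxiliaries $U_1,U_2$ abstract, let them range over the full space of Shannon strategies $\set{S}_k \to \set{X}_k$, i.e.\ identify $U_k$ with a function whose action on the known state $s_k$ produces $x_k$. Since the RX observes $S_R$ and can reconstruct $s_k = q_k(s_R)$, conditioning on $(U_k,S_R)$ is equivalent to conditioning on $(X_k,S_R)$ because $X_k = f_k(U_k,s_k)$ becomes a deterministic function of $(U_k,S_R)$. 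Replacing every $U_k$ by $X_k$ inside the mutual information terms of Lemma \ref{lem:MAC} (while retaining $U_0=U$ where it appears) transforms that region into exactly \eqref{eq:MAC_det}, with the required factorization $p(u)p(x_1|s_1,u)p(x_2|s_2,u)$ emerging from the independence $p(u_1|u_0)p(u_2|u_0)$ together with the functional dependence $x_k=f_k(u_k,s_k)$. The cost constraints carry over verbatim.

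For the converse I would run a single-letter information-inequality chain. Starting from Fano's inequality applied to the three decoded messages, I would bound each of the four rate combinations. The standard identification is to set the auxiliary as $U_i \eqdef (W_0, S_R^{i-1}, Y^{i-1})$ or a suitable variant — the precise choice being the crucial design decision, since the paper advertises that its converse ``overcomes the technique used in \cite{jafar2006capacity} restricted to the case of independent states while significantly simplifying the approach in \cite{como2011capacity}.'' The decisive structural fact to exploit is $S_k = q_k(S_R)$: at each time $i$, conditioning on $S_{Ri}$ automatically supplies $S_{1i}=q_1(S_{Ri})$ and $S_{2i}=q_2(S_{Ri})$, so the causal encoders' outputs $X_{ki}=\phi_{ki}(W_0,W_k,S_k^i)$ can be made to satisfy the Markov factorization $p(x_1|s_1,u)p(x_2|s_2,u)$ once the common-history auxiliary $U_i$ is fixed. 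The private-message bounds $R_1,R_2$ and the pairwise bound follow by conditioning additionally on $W_0$ (absorbed into $U$), while the sum bound $R_0+R_1+R_2 \le I(X_1,X_2;Y|S_R)$ comes from not conditioning on $W_0$.

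The \textbf{main obstacle} I anticipate is the converse, specifically verifying that the chosen auxiliary $U_i$ simultaneously (i) renders the conditional input distributions of the two encoders independent given $(U_i,S_{1i},S_{2i})$ — i.e.\ yields the product form $p(x_1|s_1,u)p(x_2|s_2,u)$ rather than an arbitrary joint law — and (ii) is itself independent of the current state tuple $(S_i,S_{1i},S_{2i},S_{Ri})$ as required by the theorem statement. Independence of the inputs is the heart of the matter: it must follow from the fact that $X_{1i}$ depends on $(W_0,W_1,S_1^i)$ and $X_{2i}$ on $(W_0,W_2,S_2^i)$, with $W_1 \perp W_2$ given $W_0$, but one must be careful that the \emph{past} states $S_k^{i-1}$, which are correlated across $k$ through $S_R^{i-1}$, are all packaged inside $U_i$ so that no residual dependence leaks into the current symbols. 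Once $U_i$ is chosen to carry exactly the shared causal side information, the memorylessness of the state process guarantees current-state independence from $U_i$, and a standard time-sharing/convexification argument (introducing a uniform time index and folding it into $U$) finishes the converse and matches the convex-hull form of \eqref{eq:MAC_det}. I would close with the usual cardinality-bounding argument on $\set{U}$ via the Carathéodory–Fenchel support lemma to ensure $U$ has finite alphabet.
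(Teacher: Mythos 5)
Your achievability argument is essentially the paper's: specialize Lemma~\ref{lem:MAC}, use $S_k=q_k(S_R)$ to insert $(S_1,S_2)$ and then $(X_1,X_2)$ into the conditioning, and strip the auxiliaries via the Markov chain $(S_1,S_2,U_1,U_2)\to(X_1,X_2,S_R)\to Y$ (the paper makes the final reduction from the functional form $x_k=f_k(u_k,s_k)$ to the conditional pmfs $p(x_k|s_k,u)$ precise via the functional representation lemma, which you only gesture at). One caution: conditioning on $(U_k,S_R)$ is not literally ``equivalent'' to conditioning on $(X_k,S_R)$; you may add $X_k$ for free, but removing $U_k$ requires the channel Markov chain, not invertibility.

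The converse contains a genuine gap: your primary candidate auxiliary $U_i=(W_0,S_R^{i-1},Y^{i-1})$ does not deliver the product form $p(x_1|s_1,u)p(x_2|s_2,u)$. Given $(W_0,S_R^{i-1})$ the private messages $W_1$ and $W_2$ are independent, so $X_{1i}=\phi_{1i}(W_0,W_1,S_1^i)$ and $X_{2i}=\phi_{2i}(W_0,W_2,S_2^i)$ are conditionally independent given $(U_i,S_{1i},S_{2i})$; but further conditioning on $Y^{i-1}$, which is a common descendant of both $W_1$ and $W_2$ through $X_1^{i-1},X_2^{i-1}$, couples the two messages and destroys exactly the factorization that Theorem~\ref{th:MAC_det} requires. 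The paper's choice is $U_i=(W_0,S_1^{i-1},S_2^{i-1})$ --- no $Y^{i-1}$ --- with the past output appearing only transiently in the entropy chain and then discarded via $(W_0,W_1,W_2,S_1^i,S_2^i,Y^{i-1},\{S_{Rj}\}_{j\neq i})\to(X_{1i},X_{2i},S_{Ri})\to Y_i$ before single-letterization; the deterministic-CSIT assumption enters through $(S_1^{i-1},S_2^i)$ being a function of $S_R^n$, which lets the conditioning on $S_R^n$ supply the states. You correctly identify the conditional independence of the inputs as the heart of the matter and note that the past states must be packaged inside $U_i$, so the fix is within reach of your own discussion, but as written the proposed identification would fail and the ``suitable variant'' you would actually need is left unspecified.
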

\begin{IEEEproof}
The proof is given in Appendix \ref{sec:proof_th2}.
\end{IEEEproof} 
Our main contribution lies in the converse proof, which solves the issue highlighted in \cite{jafar2006capacity} through an appropriate identification of the auxiliary variable $U$, thus allowing to greatly simplify the non-traditional, yet innovative, converse proof given by \cite{como2011capacity}. Note that Theorem \ref{th:MAC_det} refers to a setup where the RX is fully informed about $(S_1,S_2)$, hence there is no need for the TXs to convey $(S_1,S_2)$ through block-Markov schemes as in \cite{lapidoth2013common,lapidoth2013double}. In contrast to the general CSI setup in Lemma \ref{lem:MAC}, 
the private messages $W_1,W_2$ can be directly encoded into the input alphabets $\set{X}_1, \set{X}_2$ as observed already in \cite[Theorem 5]{jafar2006capacity}.
In light of Theorem \ref{th:MAC_det} we highlight the following expression for the common message capacity, which will be used to prove the main result of the second part of this paper.
\begin{remark}
Under the assumption $S_1 = q_1(S_R)$ and $S_2 = q_2(S_R)$, the expression in \eqref{eq:C0} is equivalently given by
\begin{equation}\label{eq:C_MAC_det}
   C_0(P_1,P_2) = \max_{\substack{p(u)\\x_k = f_k(u,s_k)\\\EE[b_k(X_k)]\leq P_k \; \forall k}} I(X_1,X_2;Y|S_R),
\end{equation}
where $U \in \set{U}$ is an auxiliary random variable of finite cardinality, independent of $(S,S_1,S_2,S_R)$, and where $f_k$ is a deterministic functions $\set{U}\times \set{S}_k\to \set{X}_k$ for $k=1,2$. 
\end{remark}

We conclude the first part of this paper by providing the following outer bound.
\begin{proposition}\label{prop:outer_cond_indep} Under the Markov chain $S_1 \to S_R \to S_2$, $\mathscr{C}(P_1,P_2)$ is included in the convex hull of all rate triples satisfying
\begin{align*}
    \begin{split}
    R_1+R_2 &\leq I(U_1,U_2;Y|U_0,S_R),\\
    R_0+R_1+R_2 &\leq I(U_1,U_2;Y|S_R),
    \end{split}
\end{align*}
for some auxiliary variables $(U_0,U_1,U_2) \in \set{U}_0\times \set{U}_1 \times \set{U}_2$ of finite cardinality, independent of the CSI $(S,S_1,S_2,S_R)$, with pmf factorizing as $p(u_0)p(u_1|u_0)p(u_2|u_0)$, and for some deterministic functions $f_k:\set{U}_k\times \set{S}_k\to\set{X}_k$, $x_k = f_k(u_k,s_k)$, satisfying $\EE[b_k(X_k)]\leq P_k$ for $k=1,2$. 
\end{proposition}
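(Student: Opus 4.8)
The plan is to prove this outer bound by a standard converse in which the hypothesis $S_1 \to S_R \to S_2$ is invoked exactly once, to secure the Markov factorization of the auxiliaries. Fix any achievable tuple $(R_0,R_1,R_2,P_1,P_2)$ and apply Fano's inequality. Since the messages are independent of the CSI, for the total sum rate I would write $n(R_0+R_1+R_2)=H(W_0,W_1,W_2)\le I(W_0,W_1,W_2;Y^n\mid S_R^n)+n\epsilon_n$, and for the private sum rate, conditioning on $W_0$, $n(R_1+R_2)\le I(W_1,W_2;Y^n\mid W_0,S_R^n)+n\epsilon_n$, with $\epsilon_n\to 0$. Both are then expanded by the chain rule over the time index $i$.

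The core of the argument is the identification
\[
U_{0i}=(W_0,S_R^{i-1}),\qquad U_{1i}=(W_0,W_1,S_1^{i-1}),\qquad U_{2i}=(W_0,W_2,S_2^{i-1}),
\]
together with a time-sharing variable $Q$ uniform on $\{1,\dots,n\}$ absorbed into each auxiliary (so that $Q$ is a common component of $U_0,U_1,U_2$). With this choice $X_{ki}=\phi_{ki}(W_0,W_k,S_k^{i-1},S_{ki})$ is a deterministic function of $(U_{ki},S_{ki})$, which yields the Shannon-strategy form $x_k=f_k(u_k,s_k)$ and, after time-sharing, the induced cost constraints $\EE[b_k(X_k)]\le P_k$. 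The single-letterization then follows by moving $Y^{i-1}$ and the non-current components of $S_R^n$ to the free side of each mutual information and appealing to the memoryless property of the channel and of the state process: conditioned on $(U_{1i},U_{2i},S_{Ri})$ for the first bound and on $(U_{0i},U_{1i},U_{2i},S_{Ri})$ for the second, the current state triple $(S_{1i},S_{2i},S_i)$ is independent of all strictly past and future quantities, so that $Y_i$ is conditionally independent of the discarded variables. This produces $R_0+R_1+R_2\le I(U_1,U_2;Y\mid S_R)$ and $R_1+R_2\le I(U_1,U_2;Y\mid U_0,S_R)$.

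It remains to verify the claimed structure of $(U_0,U_1,U_2)$. Independence from the current CSI $(S,S_1,S_2,S_R)$ is immediate, as each auxiliary depends only on the messages and on strictly-past state symbols, which are independent of the current i.i.d.\ state tuple. The factorization $p(u_0)p(u_1\mid u_0)p(u_2\mid u_0)$, i.e.\ the Markov chain $U_1\to U_0\to U_2$, is the one place where the hypothesis is needed: conditioned on $U_{0i}=(W_0,S_R^{i-1})$, the variable $U_{1i}$ is a function of $(W_1,S_1^{i-1})$ and $U_{2i}$ of $(W_2,S_2^{i-1})$, and these are conditionally independent because the messages satisfy $W_1\perp W_2\mid W_0$ and, by applying $S_{1j}\to S_{Rj}\to S_{2j}$ coordinatewise to each i.i.d.\ index $j<i$, one obtains $S_1^{i-1}\perp S_2^{i-1}\mid S_R^{i-1}$.

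I expect this last step --- the coordinatewise lifting of $S_1\to S_R\to S_2$ to the conditional independence of the past state blocks given $S_R^{i-1}$ --- to be the crux, since it is precisely what the memoryless assumption buys and what forces the product form of the region; without it the identified $(U_1,U_2)$ would remain correlated given $U_0$. Finally, a standard Carath\'{e}odory/convex-cover cardinality-reduction argument bounds $|\set{U}_0|,|\set{U}_1|,|\set{U}_2|$ to finite values while preserving the two active rate constraints and the per-encoder cost constraints, completing the converse.
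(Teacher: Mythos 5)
Your proof is correct and follows essentially the same route as the paper's: the same Fano/chain-rule converse with the identification $U_{0i}=(W_0,S_R^{i-1})$ and $U_{ki}$ built from $(W_k,S_k^{i-1})$, the coordinatewise lifting of the i.i.d.\ Markov chain $S_1\to S_R\to S_2$ to obtain $U_1\to U_0\to U_2$, and the standard time-sharing step. The only cosmetic difference is that the paper includes $U_{0i}$ (hence $S_R^{i-1}$) as a component of $U_{1i}$ and $U_{2i}$, which is immaterial to the bounds and to the required factorization.
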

\begin{proof}
The proof is given in Appendix \ref{ssec:proof_outer_cond_indep}.
\end{proof}
Propostion \ref{prop:outer_cond_indep} shows that, for $(S_1,S_2)$ conditionally independent given $S_R$, Shannon strategies may be sub-optimal only in terms of individual rates (indeed, \cite{lapidoth2013double}  proves that higher individual rates are  achievable for independent $(S_1,S_2)$ and $S_R=\emptyset$). This extends \cite[Theorem 4]{jafar2006capacity}, which considered independent $(S_1,S_2)$ and no common message. The bound on $R_1+R_2$ was already reported in \cite {como2013noisy} and references therein by using the same technique as \cite{como2011capacity}. Similarly to Theorem \ref{th:MAC_det}, our contribution lies in a simpler converse proof.

\section{FDD Cooperative MIMO Channel with Fading}
\label{sec:MIMO}

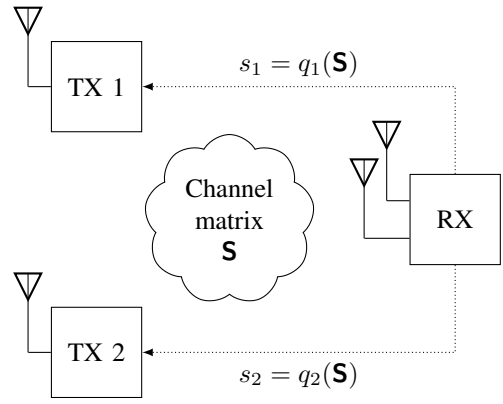
\begin{figure}[!t]
  \centering
\begin{tikzpicture}[node distance=2.5cm,auto,>=latex]
		\node [cloud, draw,cloud puffs=9,cloud puff arc=120, aspect=1, inner ysep=0em] (c) [minimum size=2cm, align=center] {Channel\\ matrix\\$\dmat{S}$};	
    \node [block] (b1) [minimum size=1.2cm, above left of=c] {TX 1};
		\node [block] (b2) [minimum size=1.2cm, below left of=c] {TX 2};
		\node [block] (d) [minimum size=1.2cm, right of=c,  node distance=3cm] {RX};

		\draw[->,densely dotted] (d) |- node [near end, swap] {$s_1 = q_1(\dmat{S})$} (b1);
		\draw[->,densely dotted] (d) |- node [near end] {$s_2 = q_2(\dmat{S})$} (b2);
		
		\draw[black, -] (b1.west)--++(0:-0.3cm) node[antenna, scale=0.5] {};
		\draw[black, -] (b2.west)--++(0:-0.3cm) node[antenna, scale=0.5] {};
		\draw[black, -] ([yshift=0.25cm]d.west)--++(0:-0.3cm) node[antenna, scale=0.5] {};
		\draw[black, -] ([yshift=-0.25cm]d.west)--++(0:-0.6cm) node[antenna, scale=0.5] {};
\end{tikzpicture}
  \caption{Illustration of a cooperative MIMO channel with fading state matrix $\dmat{S}$ known at the RX and distributed CSIT $s_1,s_2$ obtained from asymmetric feedback links.}
  \label{fig:scheme}
\end{figure}

In this section, we specialize the channel in Fig.~\ref{fig:MAC} to a practical $2\times2$ cooperative Gaussian MIMO channel with fading operating in FDD mode, illustrated in Fig.~\ref{fig:scheme}. The goal of this section is to particularize the general results of Section \ref{sec:MAC} and derive operational rules for encoding in the Gaussian MIMO setting. We point out that the results can be  generalized to more general antenna configurations and number of TXs as discussed in Section \ref{ssec:extension}.

\subsection{Channel Model and Notation}
\label{ssec:model_MIMO}
For this second part of the article, we extend our notations to better highlight multi-dimensional quantities. In particular, we denote deterministic matrices, column vectors and scalars by $\dmat{X}$, $\dvec{x}$, and $x$ respectively. We denote by $\mentry{X}{i}{j}$ the entry of $\dmat{X}$ in the $i$-th column and $j$-th row. Random matrices, column vectors and scalars are denoted instead by $\rmat{X}$, $\rvec{X}$, and $X$ respectively. The operators $( \cdot )^\herm$ and $\| \cdot \|_\mathrm{F}$ denote respectively the Hermitian transpose and the Frobenius norm. We denote by $\dvec{e}_i \in \{0,1\}^d$ a standard column selector, i.e., with the $i$-th element set to $1$ and all the other elements set to $0$. The identity matrix of dimension $n$ is denoted by $\dmat{I}_n$, or simply by $\dmat{I}$ when the dimension is clear from the context. Finally, $\stdset{C}$, $\stdset{R}_+$, and $\stdset{S}^n_+$ denote respectively the set of complex numbers, non-negative real numbers, and Hermitian positive-semidefinite matrices of dimension $n$.

We consider a classical Gaussian MIMO channel law $p(\dvec{y}|x_1,x_2,\dmat{S})$ and let the RX signal $\rvec{Y}\in \stdset{C}^{2\times 1}$ for a given channel use be given by:
\begin{equation*}
    \rvec{Y} = \rmat{S}\rvec{X}+\rvec{Z} = \rmat{S}\begin{bmatrix}X_1\\ X_2
    \end{bmatrix}+\rvec{Z},
\end{equation*}
where the state $\rmat{S} \in \stdset{C}^{2\times2}$ is a matrix of random fading coefficients, $X_k \in \stdset{C}$ is the signal transmitted by TX~$k$, subject to an average power constraint $\EE[|X_k|^2]\leq P_k$, and where $\rvec{Z} \sim \mathcal{CN}(\dvec{0},\dmat{I}_2)$ is independent of $\rmat{S}$. Furthermore, we focus on CSI distributions $p(\dmat{S},s_1,s_2,\dmat{S}_R)$  where the RX has perfect CSIR $\rmat{S}_R = \rmat{S}$, and where the CSIT is a quantized version of the CSIR, i.e., 
\begin{equation*}
    S_k = q_k(\rmat{S}), \quad k=1,2
\end{equation*}
\begin{equation*}
    q_k: \stdset{C}^{2\times2} \to \set{S}_k \eqdef \{1,\ldots,|\set{S}_k|\}, \quad |\set{S}_k| < \infty.
\end{equation*}
If $q_1 \neq q_2$, for example in the case of different feedback rates, we clearly fall into a distributed CSIT configuration. Note that, consistently with the definitions in Section \ref{ssec:system_model} with $S$ replaced by $\rmat{S}$, we consider coding over a large number of i.i.d. fading realizations. By using classical wireless terminology, we recall that this corresponds to the so-called \textit{fast-fading} regime.  

\subsection{Optimality of Distributed Linear Precoding with an Unconventional Number of Data Streams}
\label{ssec:MIMO_theorems}
In what follows, we establish the capacity region of the considered distributed setting and show that distributed linear precoding over Gaussian codewords is optimal. As we will see, the main novelty lies in an unconventional joint encoding technique for the common message $W_0$. For the sake of clarity, we present this technique by focusing on the common message capacity first.
\begin{theorem}\label{th:C_distributed_MIMO}
The common message capacity of the channel in Section \ref{ssec:model_MIMO} is given by $C_0(P_1,P_2) =$
\begin{equation}\label{eq:C_distributed_MIMO}
\max_{\substack{\dvec{g}_k(S_k)\in \stdset{C}^{d'} \\\EE\left[ \|\dvec{g}_k(S_k)\|^2\right]\leq P_k\; \forall k}}\EE\left[\log\mathrm{det}\left(\dmat{I}+\rmat{S}\dmat{\Sigma}(S_1,S_2)\rmat{S}^\herm\right)\right],
\end{equation}
where 
\begin{equation*}
    \dmat{\Sigma}(S_1,S_2) \eqdef \begin{bmatrix}
    \dvec{g}_1^\herm(S_1) \\ \dvec{g}_2^\herm(S_2) 
\end{bmatrix}\begin{bmatrix}
    \dvec{g}_1(S_1) & \dvec{g}_2(S_2) 
\end{bmatrix},
\end{equation*}
and where \begin{equation*}
    d'\leq d \eqdef |\set{S}_1|+|\set{S}_2|.
\end{equation*}
Furthermore $C_0(P_1,P_2)$ can be achieved by letting
\begin{equation}\label{eq:distr_precoding}
    \begin{bmatrix}
    X_1 \\ X_2 
\end{bmatrix}= \begin{bmatrix}
    \dvec{g}_1^\herm(S_1) \\ \dvec{g}_2^\herm(S_2) 
\end{bmatrix}\rvec{U},\quad \rvec{U}\sim \mathcal{CN}(\dvec{0},\dmat{I}_{d'}),
\end{equation}
where $\rvec{U}$ is the encoded common message.
\end{theorem}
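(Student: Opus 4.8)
The plan is to evaluate the single-letter characterization of Theorem~\ref{th:C_common} for the Gaussian model at hand. Since here $S_k=q_k(\rmat{S})=q_k(\rmat{S}_R)$, the deterministic-CSIT hypothesis of Theorem~\ref{th:MAC_det} holds, so by \eqref{eq:C_MAC_det} the common message capacity equals $\max I(X_1,X_2;Y\mid\rmat{S})$ taken over all distributed Shannon strategies $X_k=f_k(U,S_k)$ with $U$ independent of the state and $\EE[|X_k|^2]\le P_k$. First I would note that the single-letter formula, although stated for finite alphabets, extends to the present Gaussian input/output and to a possibly continuous $U$ by the standard discretization-and-limiting argument under the power constraint. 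The remaining task is then to show that this maximum coincides with the right-hand side of \eqref{eq:C_distributed_MIMO}, i.e.\ to match the optimization over arbitrary $(U,f_1,f_2)$ with the one over precoding maps $\dvec{g}_1,\dvec{g}_2$.

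For the converse, I would fix any admissible $(U,f_1,f_2)$ and condition on a realization $\rmat{S}=\dmat{S}$, which fixes $s_k=q_k(\dmat{S})$. Because $U$ is independent of $\rmat{S}$, the conditional law of $\rvec{X}$, whose entries are $X_k=f_k(U,s_k)$, depends on $\dmat{S}$ only through $(s_1,s_2)$; let $\dmat{K}(s_1,s_2)$ denote its conditional covariance. Since $\rvec{Y}=\dmat{S}\rvec{X}+\rvec{Z}$ with $\rvec{Z}\sim\mathcal{CN}(\dvec{0},\dmat{I})$ independent, the maximum-entropy bound for complex vectors of a given covariance gives $I(X_1,X_2;Y\mid\rmat{S}=\dmat{S})\le\log\det\!\big(\dmat{I}+\dmat{S}\,\dmat{K}(s_1,s_2)\,\dmat{S}^{\herm}\big)$. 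The crucial structural step is to collect all Shannon strategies into a single vector $\rvec{V}\in\stdset{C}^{d}$ that stacks $f_1(U,\cdot)$ over $\set{S}_1$ and $f_2(U,\cdot)$ over $\set{S}_2$; as $\rvec{V}$ is a function of $U$ alone, its covariance $\dmat{C}\in\stdset{S}^{d}_{+}$ is one fixed matrix independent of $\dmat{S}$, and $\rvec{X}$ is obtained by selecting two of the $d$ components of $\rvec{V}$, so $\dmat{K}(s_1,s_2)$ is exactly the $2\times2$ principal submatrix of $\dmat{C}$ corresponding to the entries $f_1(U,s_1)$ and $f_2(U,s_2)$. Factoring $\dmat{C}=\dmat{\Gamma}^{\herm}\dmat{\Gamma}$ with $\dmat{\Gamma}\in\stdset{C}^{d'\times d}$ and $d'=\mathrm{rank}(\dmat{C})\le d$, and reading off the columns of $\dmat{\Gamma}$ indexed by $s_1$ and $s_2$ as $\dvec{g}_1(s_1),\dvec{g}_2(s_2)\in\stdset{C}^{d'}$, yields precisely $\dmat{K}(s_1,s_2)=\dmat{\Sigma}(s_1,s_2)$. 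Its diagonal gives $\|\dvec{g}_k(s_k)\|^2=\VV\!\big(f_k(U,s_k)\big)\le\EE[\,|X_k|^2\mid S_k=s_k\,]$, hence $\EE[\|\dvec{g}_k(S_k)\|^2]\le P_k$, so the extracted maps are feasible for \eqref{eq:C_distributed_MIMO}. Averaging over $\rmat{S}$ bounds the rate of every admissible scheme by the right-hand side of \eqref{eq:C_distributed_MIMO}, which proves that $C_0(P_1,P_2)$ does not exceed it.

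For achievability, I would verify that the precoder \eqref{eq:distr_precoding} is a valid distributed Shannon strategy: with $f_k(u,s_k)=\dvec{g}_k^{\herm}(s_k)u$ and $\rvec{U}\sim\mathcal{CN}(\dvec{0},\dmat{I}_{d'})$ independent of the state, one has $\EE[|X_k|^2]=\EE[\|\dvec{g}_k(S_k)\|^2]\le P_k$. Conditioned on $\rmat{S}=\dmat{S}$, $\rvec{X}$ is a proper complex Gaussian vector with covariance $\dmat{\Sigma}(s_1,s_2)$, so the point-to-point Gaussian MIMO formula gives $I(X_1,X_2;Y\mid\rmat{S}=\dmat{S})=\log\det\!\big(\dmat{I}+\dmat{S}\,\dmat{\Sigma}(s_1,s_2)\,\dmat{S}^{\herm}\big)$, which remains valid when $\dmat{\Sigma}$ is rank-deficient because $\rvec{Z}$ has full-rank covariance. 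Taking the expectation over $\rmat{S}$ and choosing $\dvec{g}_1,\dvec{g}_2$ to be the maximizers of \eqref{eq:C_distributed_MIMO} gives the matching lower bound, which together with the converse proves equality; attainment of the maximum follows from continuity of the objective and compactness of the feasible set once $d'$ is capped at $d$.

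I expect the main obstacle to be the converse structural step, namely establishing that the distributed functional dependence forces the conditional input covariance $\dmat{K}(s_1,s_2)$ to be a principal submatrix of one fixed $d\times d$ matrix $\dmat{C}$. This single fact delivers both the dimension bound $d'\le d=|\set{S}_1|+|\set{S}_2|$ and the outer-product form of $\dmat{\Sigma}$, and it is precisely the mechanism behind the \emph{unconventional number of data streams}. A secondary point that I would state carefully is the extension of the finite-alphabet single-letter formula to continuous $U$ and Gaussian inputs under the average power constraint; the maximum-entropy and proper-Gaussian optimality steps are otherwise routine.
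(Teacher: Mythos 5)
Your proof is correct and follows essentially the same route as the paper: the maximum differential entropy bound applied to the single-letter expression \eqref{eq:C_MAC_det}, combined with the key observation that stacking the Shannon strategies $f_k(U,\cdot)$ over all $|\set{S}_k|$ CSIT realizations into a single $d$-dimensional random vector and factoring its (fixed, state-independent) Gram matrix yields feasible distributed precoders of dimension $d'\le d$ that reproduce every admissible conditional input covariance. The only organizational difference is that you place this stacking-and-factoring step inside the converse, whereas the paper places it in the achievability part; the mathematical content is identical.
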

\begin{IEEEproof}
We first apply the well-known \textit{maximum differential entropy lemma} \cite[p. 21]{el2011network} to the mutual information in  \eqref{eq:C_MAC_det} using the conditional input covariance $
    \dmat{\Sigma}(S_1,S_2)=\EE[\rvec{X}\rvec{X}^\herm|S_1,S_2]$, obtaining the upper bound 
\begin{equation*}
R_0 \leq \EE\left[\log\mathrm{det}\left(\dmat{I}+\rmat{S}\dmat{\Sigma}(S_1,S_2)\rmat{S}^\herm\right)\right].
\end{equation*}
For the achievability part, it suffices to show that every feasible $\dmat{\Sigma}(S_1,S_2)$ can be obtained via distributed linear precoders of dimension $d=|\set{S}_1|+|\set{S}_2|$. To this end, we first define the random vectors
\begin{equation*}
\tilde{\rvec{X}}_k \eqdef \begin{bmatrix}
f_k(U,1) \\ \vdots \\ f_k(U,|\set{S}_k|)
\end{bmatrix}, \quad k =1,2,
\end{equation*}
collecting the random inputs $X_k$ conditioned on each of the $|\set{S}_k|$ realizations of $S_k$, and the covariance matrix   
\begin{equation*}\label{eq:Q}
\dmat{Q} \eqdef \EE\left[\begin{bmatrix}
\tilde{\rvec{X}}_1 \\ \tilde{\rvec{X}}_2
\end{bmatrix} \begin{bmatrix}
\tilde{\rvec{X}}_1^\herm & \tilde{\rvec{X}}_2^\herm
\end{bmatrix}\right] \in \stdset{S}^{d\times d}_+.
\end{equation*}
It is easy to see that $\dmat{Q}$ contains all the elements of $\dmat{\Sigma}(s_1,s_2)$, $\forall (s_1,s_2) \in \set{S}_1\times \set{S}_2$. Note that, due to the power constraint $
    \sum_{s_k\in\set{S}_k}\EE[|f_k(U,s_k)|^2]p(s_k) \leq P_k < \infty$, 
any feasible $(f_1,f_2,p_U)$ must satisfy $\EE[|f_k(U,s_k)|^2] < \infty$ $\forall s_k \in \set{S}_k$, hence $\dmat{Q}$ has finite entries. Since $\dmat{Q}$ is Hermitian positive semi-definite, there exists a square matrix $\dmat{F}\in \stdset{C}^{d\times d}$ such that $\dmat{F}^\herm \dmat{F} = \dmat{Q}$. We denote its column vectors by
\begin{equation}\label{eq:F_matrix}
    \setlength\arraycolsep{3.5pt}
    \dmat{F} \eqdef \begin{bmatrix}\dvec{g}_1(1) & \ldots & \dvec{g}_1(|\set{S}_1|) & \dvec{g}_2(1) & \ldots & \dvec{g}_2(|\set{S}_2|) \end{bmatrix}.
\end{equation}
Finally, simple calculations show that the scheme in \eqref{eq:distr_precoding} with distributed linear precoders designed using the above procedure, i.e., selected from \eqref{eq:F_matrix}, preserves the desired $\dmat{\Sigma}(S_1,S_2)$, and  attains the maximum entropy upper bound. 
\end{IEEEproof}

The main result of Theorem \ref{th:C_distributed_MIMO} is that distributed linear precoding \cite{gesbert2018team} of shared Gaussian codewords achieves the performance limits of the considered cooperative MIMO setting. However, as a sufficient condition to prove achievability, Theorem \ref{th:C_distributed_MIMO} considers the transmission of possibly $d = |\set{S}_1|+|\set{S}_2|$ independent data streams. This unconventional design choice appears to be in sharp contrast with the centralized CSIT configuration (i.e., $S_1 = S_2 \eqdef S_T$), where the capacity of the $2\times 2$ MIMO channel is achieved by encoding $d'\leq 2$ streams in the presence of perfect message cooperation. In this latter case, by considering the per-antenna power constraint, the capacity takes the well-known expression given for example by \cite{caire2007adaptive}
\begin{equation}\label{eq:C_centralized}
    C_0(P_1,P_2) = \max_{\substack{\dmat{\Sigma}(S_T)\in\stdset{S}^2_+, \; \forall k\\\EE[\mentry{\Sigma}{k}{k}(S_T)]\leq P_k} }\EE\left[\log\mathrm{det}\left(\dmat{I}+\rmat{S}\dmat{\Sigma}(S_T)\rmat{S}^\herm\right)\right],
\end{equation}
where $\dmat{\Sigma}(s_T)\eqdef \EE[\rvec{XX}^\herm|S_T ]$ is the conditional input covariance. Clearly, the capacity in \eqref{eq:C_centralized} can be achieved by taking the matrix square-root $\dmat{G}(S_T) \eqdef \dmat{\Sigma}^{\frac{1}{2}}(S_T) \in \stdset{C}^{2\times2}$ and by letting
\begin{equation*}
    \begin{bmatrix}
    X_1 \\ X_2 
\end{bmatrix}= \dmat{G}^\herm(S_T)\rvec{U},\quad \rvec{U}\sim \mathcal{CN}(\dvec{0},\dmat{I}_2),
\end{equation*}
or, in other words, by precoding $d'=2$ data streams only.
Such an approach cannot be used for general distributed settings, as it generally leads to unfeasible linear precoders violating the functional dependencies $x_k=f_k(\dvec{u},s_k)$. 

The proof of Theorem \ref{th:C_distributed_MIMO} addresses this issue by \textit{increasing} the dimensionality $d'$ of the linear precoders up to $d \gg 2$, i.e., beyond conventional design choices. We now show that this unconventional linear precoding technique can be applied to extend Theorem \ref{th:C_distributed_MIMO} to the full capacity region.
\begin{theorem}\label{th:C_region_MIMO}
The capacity region $\mathscr{C}(P_1,P_2)$ of the channel in Section \ref{ssec:model_MIMO} is given by the union of all rate triples $(R_0,R_1,R_2)$ such that
\begin{align*}\label{eq:C_region_MIMO}
\begin{split}
	R_1 & \leq \EE\left[\log\left(1+\gamma_1(S_1)\|\rmat{S}\dvec{e}_1\|^2\right)\right]\\
	R_2 & \leq \EE\left[\log\left(1+\gamma_2(S_2)\|\rmat{S}\dvec{e}_2\|^2\right)\right]\\
	R_1 + R_2 &\leq \EE\left[\log\mathrm{det}\left(\dmat{I}+\rmat{S}\mathrm{diag}(\gamma_1(S_1),\gamma_2(S_2))\rmat{S}^\herm\right)\right]\\
    R_0+R_1+R_2 &\leq \EE\left[\log\mathrm{det}\left(\dmat{I}+\rmat{S}\dmat{\Sigma}(S_1,S_2)\rmat{S}^\herm\right)\right]
\end{split}
\end{align*}
where
\begin{equation*}
    \dmat{\Sigma}(S_1,S_2) = \begin{bmatrix}
    \dvec{g}_1^\herm(S_1) \\ \dvec{g}_2^\herm(S_2) 
\end{bmatrix}\left[
    \dvec{g}_1(S_1) \; \dvec{g}_2(S_2) 
\right] + \begin{bmatrix}
    \gamma_1(S_1) & 0 \\ 0 & \gamma_2(S_2) 
\end{bmatrix},
\end{equation*}
for some $\dvec{g}_k(S_k)\in \stdset{C}^{d'}$ and $\gamma_k(S_k) \in \stdset{R}_+$ such that 
\begin{equation*}
    d'\leq d \eqdef |\set{S}_1|+|\set{S}_2|
\end{equation*}
and $\EE\left[ \|\dvec{g}_k(S_k)\|^2\right]+\EE\left[\gamma_k(S_k)\right]\leq P_k$ for $k=1,2$.
Furthermore any point in $\mathscr{C}(P_1,P_2)$ can be achieved by letting
\begin{equation}\label{eq:achievable_scheme_MAC}
    \begin{bmatrix}
    X_1 \\ X_2 
\end{bmatrix}= \begin{bmatrix}
    \dvec{g}_1^\herm(S_1) & \sqrt{\gamma_1(S_1)} & 0 \\ \dvec{g}_2^\herm(S_2) & 0 & \sqrt{\gamma_2(S_2)}
\end{bmatrix}\begin{bmatrix}
\rvec{U} \\ V_1 \\ V_2
\end{bmatrix},
\end{equation}
where $(\rvec{U},V_1,V_2) \sim \mathcal{CN}(\vec{0},\dmat{I}_{d'+2})$ are the encoded common and private messages, respectively.
\end{theorem}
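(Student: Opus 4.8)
The plan is to prove that the region claimed here coincides with the general single-letter region of Theorem~\ref{th:MAC_det}, specialized to the Gaussian channel of Section~\ref{ssec:model_MIMO} with $S_R=\rmat{S}$. Since Theorem~\ref{th:MAC_det} already identifies its mutual-information region with the capacity region (after the routine extension from finite alphabets to the Gaussian per-antenna cost-constrained setting via input quantization), it suffices to establish two inclusions: every admissible law $p(x_1|s_1,u)p(x_2|s_2,u)p(u)$ produces a point inside the claimed Gaussian region (converse), and conversely every point of that region is realized by the scheme \eqref{eq:achievable_scheme_MAC} (achievability).

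For the converse I would bound each of the four terms in \eqref{eq:MAC_det} by the maximum differential entropy lemma, working throughout with the uncentered conditional correlation $\dmat{\Sigma}(s_1,s_2)=\EE[\rvec{X}\rvec{X}^\herm\mid s_1,s_2]$; uncentering only inflates the covariance, so the entropy bounds remain valid, while the power constraint reads $\EE[[\dmat{\Sigma}(S_1,S_2)]_{kk}]\le P_k$. The key structural fact is that the Markov factorization forces $X_1\perp X_2$ given $(U,s_1,s_2)$, so by the law of total covariance
\begin{equation*}
\dmat{\Sigma}(s_1,s_2) = \EE_U\!\left[\dvec{\mu}(U,s_1,s_2)\dvec{\mu}^\herm(U,s_1,s_2)\right] + \mathrm{diag}\!\left(\gamma_1(s_1),\gamma_2(s_2)\right),
\end{equation*}
where $\dvec{\mu}(u,s_1,s_2)=[\EE[X_1|u,s_1],\EE[X_2|u,s_2]]^\herm$ is the conditional mean and $\gamma_k(s_k)\eqdef\EE_U[\mathrm{Var}(X_k|U,s_k)]$ depends on $s_k$ alone because $U$ is independent of the states. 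Conditioning on $U$ makes the mean deterministic, leaving the diagonal covariance $\mathrm{diag}(\mathrm{Var}(X_1|U,s_1),\mathrm{Var}(X_2|U,s_2))$; pushing the expectation over $U$ inside $\log\det(\cdot)$ by its concavity then yields exactly the $R_1$, $R_2$, and $R_1+R_2$ bounds with the diagonal matrix $\mathrm{diag}(\gamma_1(S_1),\gamma_2(S_2))$.

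The crux of the converse is to represent the remaining (common) term $\EE_U[\dvec{\mu}\dvec{\mu}^\herm]$ with per-transmitter vectors $\dvec{g}_k(s_k)\in\stdset{C}^{d'}$ depending only on $s_k$, despite the off-diagonal entry $\EE_U[\EE[X_1|U,s_1]\,\EE[X_2|U,s_2]^*]$ coupling both states. This is precisely the distributed/team structure already resolved in Theorem~\ref{th:C_distributed_MIMO}: collecting the $|\set{S}_1|+|\set{S}_2|=d$ conditional-mean functions $\{\EE[X_k|U,s_k]\}$ as elements of $L^2(p_U)$, forming their $d\times d$ Gram matrix $\dmat{Q}$, and taking any factorization $\dmat{F}^\herm\dmat{F}=\dmat{Q}$ supplies columns $\dvec{g}_k(s_k)$ of dimension $d'\le d$ with $\EE_U[\dvec{\mu}\dvec{\mu}^\herm]=[\dvec{g}_1(s_1)\ \dvec{g}_2(s_2)]^\herm[\dvec{g}_1(s_1)\ \dvec{g}_2(s_2)]$. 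This reproduces the claimed $\dmat{\Sigma}(S_1,S_2)$ in the sum bound $R_0+R_1+R_2\le I(X_1,X_2;Y|S_R)$ and turns the power constraint into $\EE[\|\dvec{g}_k(S_k)\|^2]+\EE[\gamma_k(S_k)]\le P_k$. For achievability I would verify that \eqref{eq:achievable_scheme_MAC} induces an admissible law: given $\rvec{U}$, each $X_k=\dvec{g}_k^\herm(S_k)\rvec{U}+\sqrt{\gamma_k(S_k)}V_k$ is Gaussian with conditional variance $\gamma_k(S_k)$ and $X_1,X_2$ are conditionally independent, so setting $U=\rvec{U}$ lands the point in the region of Theorem~\ref{th:MAC_det}; Gaussianity makes each mutual information meet its bound with equality, and convexity is absorbed by the union over $(\dvec{g}_k,\gamma_k)$.

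I expect the main obstacle to be the consistency-and-dimension step of the converse, namely exhibiting state-local vectors $\dvec{g}_k(s_k)$ of bounded dimension $d'\le d$ that simultaneously reproduce the common covariance across \emph{every} state pair $(s_1,s_2)$. This is exactly where the unconventional "grow the number of streams up to $d$" idea of Theorem~\ref{th:C_distributed_MIMO} is indispensable, since a naive rank-$2$ square root of $\dmat{\Sigma}(s_1,s_2)$ would in general violate the functional dependence $x_k=f_k(\cdot,s_k)$. A secondary, routine obstacle is justifying the finite-alphabet-to-Gaussian extension of Theorem~\ref{th:MAC_det} and confirming that uncentering the second moments leaves all four bounds intact.
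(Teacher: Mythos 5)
Your proposal is correct and follows essentially the same route as the paper's proof in Appendix B: a maximum-differential-entropy converse applied to the single-letter region of Theorem~\ref{th:MAC_det}, the decomposition of the conditional second moment into a diagonal ``private'' part $\mathrm{diag}(\gamma_1(S_1),\gamma_2(S_2))$ plus the correlation of the conditional means, and the Gram-matrix factorization from the proof of Theorem~\ref{th:C_distributed_MIMO} (with $f_k$ replaced by $\mu_k(U,S_k)=\EE[X_k|U,S_k]$) to realize that correlation with state-local precoders of dimension $d'\leq d$. The only cosmetic difference is that you derive the conditioned entropy bounds via concavity of $\log\det$ after conditioning on $U$, whereas the paper invokes the conditional form of the maximum differential entropy lemma directly; both yield identical bounds.
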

\begin{proof}
The proof is given in Appendix \ref{ssec:proof_capacity_region}.
\end{proof}
Theorem \ref{th:C_region_MIMO} shows that superposition of jointly and independently encoded Gaussian codes achieves the capacity region. However, while encoding of the private messages $W_1,W_2$ follows traditional approaches (one power-controlled stream per TX), the joint encoding of $W_0$ may require a larger number of precoded data streams (beyond two streams in our case) as already seen for Theorem \ref{th:C_distributed_MIMO}. A more detailed analysis of the role played by $d'$ in terms of optimal distributed precoding is provided in the following sections.

\subsection{Convex Reformulation for Optimal Distributed Precoding Design}
\label{ssec:computation}
The distributed precoding design problem \eqref{eq:C_distributed_MIMO} belongs to the class of \textit{static team decision} problems \cite{teams1972,gesbert2018team}, for which no efficient solutions are known in general. However, in contrast to the traditional precoding design with $d'\leq N_T$, where $N_T = 2$ is the total number of TX antennas, by letting $d'\leq d$ we are able to recast the optimal precoding design problem \eqref{eq:C_distributed_MIMO} into an equivalent \textit{convex} problem.
\begin{proposition}\label{cor:computation_C}
Problem \eqref{eq:C_distributed_MIMO} is equivalent to the following convex problem
    \begin{equation}\label{eq:C_distributedMIMO_convex}
        \begin{gathered}
         \underset{\dmat{Q} \in\stdset{S}^d_+}{\text{maximize}} \quad 
          \EE\left[ \log\mathrm{det}\left(\dmat{I}+\rmat{S}_{\mathrm{eq}}  \dmat{Q} \rmat{S}_{\mathrm{eq}}^\herm\right)\right] \quad \text{subject to} \\
          \sum_{i=1}^{|\set{S}_1|} \mentry{Q}{i}{i}\PP(S_1 = i) \leq P_1, \\
          \sum_{j=|\set{S}_1|+1} ^{|\set{S}_1|+|\set{S}_2|}\mentry{Q}{j}{j}\PP(S_2 = j - |\set{S}_1|) \leq P_2,
        \end{gathered}
    \end{equation}
    where we defined $\rmat{S}_{\mathrm{eq}} \eqdef \rmat{S}\dmat{E}^\herm(S_1,S_2) \in \stdset{C}^{2\times d}$, and
    \begin{equation*}
    \dmat{E}(S_1,S_2) \eqdef  \begin{bmatrix} 
    \dvec{e}_{S_1} & \bm{0} \\ 
    \bm{0} & \dvec{e}_{S_2}
    \end{bmatrix}\in \stdset{C}^{d\times 2},
    \end{equation*}
    where $\dvec{e}_{S_1} \in \{0,1\}^{|\set{S}_1|}$ and $\dvec{e}_{S_2} \in \{0,1\}^{|\set{S}_2|}$ are standard column selectors. 
\end{proposition}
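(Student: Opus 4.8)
The plan is to show that the optimization in \eqref{eq:C_distributed_MIMO} over distributed precoding vectors $\dvec{g}_1(S_1), \dvec{g}_2(S_2)$ is equivalent to the optimization in \eqref{eq:C_distributedMIMO_convex} over a single Hermitian positive-semidefinite matrix $\dmat{Q}$, and that the latter is convex. The key observation is that, by the construction used in the achievability proof of Theorem \ref{th:C_distributed_MIMO}, the matrix $\dmat{Q}$ collects the second-order statistics of all precoding vectors across every state realization, and the objective depends on $\dmat{Q}$ only through a state-dependent \emph{selection} of its entries. First I would make this selection explicit: for fixed realizations $(S_1, S_2) = (i, j)$, the effective conditional covariance $\dmat{\Sigma}(i,j) = \dmat{E}^\herm(i,j)\,\dmat{Q}\,\dmat{E}(i,j)$, where the selector $\dmat{E}(S_1,S_2)$ picks out the rows/columns of $\dmat{Q}$ indexed by $i$ (in the first $|\set{S}_1|$ block) and $|\set{S}_1|+j$ (in the second block). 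Substituting this into the log-det objective and using the definition $\rmat{S}_{\mathrm{eq}} = \rmat{S}\dmat{E}^\herm(S_1,S_2)$ yields exactly the objective of \eqref{eq:C_distributedMIMO_convex}.

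The equivalence of the two problems proceeds in both directions. For the forward direction, given any feasible $(\dvec{g}_1, \dvec{g}_2)$ I would form $\dmat{F}$ as in \eqref{eq:F_matrix} and set $\dmat{Q} = \dmat{F}^\herm\dmat{F} \in \stdset{S}^d_+$; the diagonal blocks of $\dmat{Q}$ then encode the per-state powers $\|\dvec{g}_k(i)\|^2$, so that the average power constraints $\EE[\|\dvec{g}_k(S_k)\|^2] = \sum_i \|\dvec{g}_k(i)\|^2 \PP(S_k = i) \leq P_k$ translate directly into the two weighted-trace constraints on the diagonal entries of $\dmat{Q}$. Conversely, given any feasible $\dmat{Q} \in \stdset{S}^d_+$, I would invoke its matrix square root (exactly as in the achievability proof of Theorem \ref{th:C_distributed_MIMO}) to factor $\dmat{Q} = \dmat{F}^\herm\dmat{F}$ and read off the precoders $\dvec{g}_1(i), \dvec{g}_2(j)$ from the columns of $\dmat{F}$, recovering a feasible point of \eqref{eq:C_distributed_MIMO} with the same objective value. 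This two-way correspondence, together with the fact that $d' \leq d$ suffices (established in Theorem \ref{th:C_distributed_MIMO}), establishes that the maxima coincide.

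It remains to verify convexity of \eqref{eq:C_distributedMIMO_convex}. The feasible set $\stdset{S}^d_+$ intersected with the two linear (hence affine) constraints on the diagonal entries is convex. For the objective, I would argue that for each fixed channel realization $\rmat{S}$ the map $\dmat{Q} \mapsto \log\mathrm{det}(\dmat{I} + \rmat{S}_{\mathrm{eq}}\dmat{Q}\rmat{S}_{\mathrm{eq}}^\herm)$ is concave in $\dmat{Q}$, since it is the composition of the concave and monotone $\log\det$ function over the positive-semidefinite cone with the affine map $\dmat{Q} \mapsto \dmat{I} + \rmat{S}_{\mathrm{eq}}\dmat{Q}\rmat{S}_{\mathrm{eq}}^\herm$. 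Taking the expectation over $\rmat{S}$ preserves concavity, so the overall problem maximizes a concave objective over a convex set.

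The main obstacle I anticipate is not the convexity argument, which is standard, but rather the careful bookkeeping in the selection step: one must verify that $\dmat{E}^\herm(S_1,S_2)\,\dmat{Q}\,\dmat{E}(S_1,S_2)$ reproduces precisely the $2\times 2$ conditional covariance $\dmat{\Sigma}(S_1,S_2)$ including its cross terms $\dvec{g}_1^\herm(S_1)\dvec{g}_2(S_2)$, and that the block structure of the selector correctly indexes into the two diagonal sub-blocks of $\dmat{Q}$. This requires tracking the ordering convention of \eqref{eq:F_matrix} (first all $|\set{S}_1|$ columns of TX~1, then all $|\set{S}_2|$ columns of TX~2) and confirming that the power constraints, which are per-transmitter and state-averaged, map onto exactly the indicated partial sums of the diagonal of $\dmat{Q}$.
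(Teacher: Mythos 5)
Your proposal is correct and follows essentially the same route as the paper's proof: identifying $\dmat{Q}=\dmat{F}^\herm\dmat{F}$ with $d'=d$, using the selector identity $\dmat{\Sigma}(S_1,S_2)=\dmat{E}^\herm(S_1,S_2)\dmat{Q}\dmat{E}(S_1,S_2)$, and mapping the per-TX average power constraints to linear constraints on the diagonal of $\dmat{Q}$. You simply spell out the two-way correspondence and the concavity of the objective, which the paper leaves implicit.
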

\begin{IEEEproof} The proof follows by simply rewriting \eqref{eq:C_distributed_MIMO} in light of the technique used for the proof of Theorem~\ref{th:C_distributed_MIMO}. Specifically, we let $d'=d$ and use
$\rmat{S}\dmat{\Sigma}(S_1,S_2)\rmat{S}^\herm = \rmat{S}\dmat{E}^\herm(S_1,S_2) \dmat{F}^\herm \dmat{F} \dmat{E}(S_1,S_2)\rmat{S}^\herm = \rmat{S}_{\mathrm{eq}}  \dmat{Q} \rmat{S}_{\mathrm{eq}}^\herm$, where $\dmat{F}$ is given by \eqref{eq:F_matrix} and $\dmat{Q} = \dmat{F}^\herm \dmat{F}$. The power constraints correspond to linear constraints on the diagonal elements of $\dmat{Q}$.
\end{IEEEproof}
Problem \eqref{eq:C_distributedMIMO_convex} corresponds to the capacity of a virtual $d\times 2$ MIMO channel with state $\rmat{S}_{\mathrm{eq}}$, perfect CSIR, no CSIT, and (fixed) transmit covariance $\dmat{Q}$.
The capacity achieving distributed precoders for the original channel can be then designed from the optimal $\dmat{Q}^\star$ in Problem \eqref{eq:C_distributedMIMO_convex} as follows:
\begin{equation}\label{eq:square_root_design}
    \begin{bmatrix}
    \dvec{g}^\star_1(S_1) & \dvec{g}^\star_2(S_2) 
\end{bmatrix} = (\dmat{Q}^\star)^{\frac{1}{2}}\dmat{E}(S_1,S_2) \in \stdset{C}^{d\times 2}.
\end{equation}
An important remark here is that if the constraint $d'\leq d$ of Problem \eqref{eq:C_distributed_MIMO} is replaced by $d' < d$, the technique of  Proposition \ref{cor:computation_C} does not lead to a convex reformulation. This is because the $d\times d$ matrix $\dmat{F}$ is replaced by a $d'\times d$ matrix $\dmat{F}'$, hence introducing a non-convex constraint $\mathrm{rank}(\dmat{Q})\leq d' < d$ to Problem \eqref{eq:C_distributedMIMO_convex}. However, note that if the optimal $\dmat{Q}^\star$ for the unconstrained problem has rank $r < d$, then we can reduce with no loss of optimality the dimensionality of $\dvec{g}^\star_k(S_k)$ in \eqref{eq:square_root_design} down to $d'=r$.

Mirroring the previous section, the above result on common message capacity can be extended to the following weighted sum-rate maximization problem 
\begin{equation}\label{eq:wsr}
\underset{(R_0,R_1,R_2)\in \mathscr{C}(P_1,P_2)}{\text{maximize}}\alpha_0R_0+\alpha_1R_1+\alpha_2R_2,
\end{equation}
where $\alpha_k \geq 0$, $k=0,1,2$ are non-negative weights identifying rate priorities. We recall that the above problem can be used to characterize the boundary of $\mathscr{C}(P_1,P_2)$, since the weights can be interpreted as coefficients of a supporting hyperplane to such boundary (see e.g. \cite{el2011network,liu2006gausscommon} 
and references therein).
\begin{proposition}\label{cor:computation_wsr}
Problem \eqref{eq:wsr} is equivalent to the following convex problem
\begin{equation}\label{eq:wsr_convex}
\begin{gathered}
 \underset{\substack{R_k \in \stdset{R}_+,\;\dmat{Q} \in \stdset{S}^d_+ \\\gamma_1(S_1),\gamma_2(S_2)\in \stdset{R}_+}}{\text{maximize}} \quad
          \alpha_0R_1 + \alpha_1R_1+\alpha_2R_2   \quad \text{subject to}\\
 R_1  \leq \EE\left[\log\left(1+\gamma_1(S_1)\|\rmat{S}\dvec{e}_1\|^2\right)\right],  \\
		 R_2  \leq \EE\left[\log\left(1+\gamma_2(S_2)\|\rmat{S}\dvec{e}_2\|^2\right)\right],  \\
	 R_1 + R_2 \leq \EE\left[\log\mathrm{det}\left(\dmat{I}+\rmat{S}\mathrm{diag}(\gamma_1(S_1),\gamma_2(S_2))\rmat{S}^\herm\right)\right],  \\
	 \begin{aligned}
         & R_0 +R_1+R_2 \leq \\
         & \quad \EE\left[ \log\mathrm{det}\left(\dmat{I}+\rmat{S}_{\mathrm{eq}}  \dmat{Q} \rmat{S}_{\mathrm{eq}}^\herm + \rmat{S} \mathrm{diag}(\gamma_1(S_1),\gamma_2(S_2))\rmat{S}^\herm \right)\right],
         \end{aligned}\\
         \sum_{i=1}^{|\set{S}_1|}( \mentry{Q}{i}{i}+\gamma_1(i))\PP(S_1 = i) \leq P_1, \\
         \sum_{j=|\set{S}_1|+1} ^{|\set{S}_1|+|\set{S}_2|}(\mentry{Q}{j}{j}+\gamma_2(j-|\set{S}_1|))\PP(S_2 = j-|\set{S}_1|) \leq P_2,
\end{gathered}
\end{equation}
where $\rmat{S}_{\mathrm{eq}}$ is given as in Proposition \ref{cor:computation_C}.
\end{proposition}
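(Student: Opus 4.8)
The plan is to establish Proposition~\ref{cor:computation_wsr} by combining the convex reformulation technique of Proposition~\ref{cor:computation_C} with the capacity region characterization of Theorem~\ref{th:C_region_MIMO}. First I would observe that the weighted sum-rate problem \eqref{eq:wsr} is, by definition, a maximization over the capacity region $\mathscr{C}(P_1,P_2)$, whose explicit description is already given by Theorem~\ref{th:C_region_MIMO}. Substituting that characterization turns \eqref{eq:wsr} into an optimization over the free variables $\dvec{g}_k(S_k)\in\stdset{C}^{d'}$ and $\gamma_k(S_k)\in\stdset{R}_+$, subject to the four rate inequalities and the two power constraints. The goal is to show this is equivalent to the convex problem \eqref{eq:wsr_convex}.

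The key step is to eliminate the nonconvexity arising from the bilinear precoding variables $\dvec{g}_k(S_k)$, exactly as in Proposition~\ref{cor:computation_C}. Setting $d'=d$ and invoking the identity $\rmat{S}\,\dmat{\Sigma}_0(S_1,S_2)\,\rmat{S}^\herm = \rmat{S}_{\mathrm{eq}}\dmat{Q}\rmat{S}_{\mathrm{eq}}^\herm$, where $\dmat{\Sigma}_0$ denotes the rank-structured (common-message) part of $\dmat{\Sigma}(S_1,S_2)$ and $\dmat{Q}=\dmat{F}^\herm\dmat{F}$, I would rewrite the sum-rate bound on $R_0+R_1+R_2$ in terms of the new Gram matrix variable $\dmat{Q}\in\stdset{S}^d_+$. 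This is precisely the term appearing in the last constraint of \eqref{eq:wsr_convex}, with the additive private-stream contribution $\rmat{S}\,\mathrm{diag}(\gamma_1(S_1),\gamma_2(S_2))\,\rmat{S}^\herm$ carried through unchanged. The three remaining rate bounds on $R_1$, $R_2$, and $R_1+R_2$ involve only the $\gamma_k(S_k)$ and are already in the desired form; they are concave in $\gamma_k$ since $\log\det$ is concave and the argument is affine in $\gamma_k$. Finally, the per-TX power constraints decompose additively into the diagonal entries of $\dmat{Q}$ (the common-stream power) plus the $\gamma_k$ (the private-stream power), yielding the two linear constraints of \eqref{eq:wsr_convex}.

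To complete the equivalence I would verify both directions of the reduction. For the forward direction, any feasible $(\dvec{g}_k,\gamma_k)$ yields a feasible $\dmat{Q}=\dmat{F}^\herm\dmat{F}$ with the same objective value, the diagonal power bookkeeping matching the constraint \eqref{eq:wsr_convex}. For the converse direction, given an optimal $\dmat{Q}^\star\in\stdset{S}^d_+$, I would recover feasible distributed precoders via the square-root construction $[\dvec{g}_1^\star(S_1)\ \dvec{g}_2^\star(S_2)] = (\dmat{Q}^\star)^{\frac{1}{2}}\dmat{E}(S_1,S_2)$ exactly as in \eqref{eq:square_root_design}, which reproduces the same objective and respects the functional dependencies $x_k=f_k(\dvec{u},s_k)$. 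Convexity then follows because the objective is a nonnegative weighted sum of the auxiliary rate variables $R_k$ maximized over a feasible set cut out by concave inequality constraints (each a $\log\det$ expectation, affine in $\dmat{Q}$ and in $\gamma_k$) together with linear power and rate constraints.

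The main obstacle I anticipate is the careful bookkeeping of the power budget split between the common data streams encoded in $\dmat{Q}$ and the private streams parametrized by $\gamma_k$. Unlike Proposition~\ref{cor:computation_C}, where all transmit power is allocated to the common message, here the per-TX power constraint $\EE[\|\dvec{g}_k(S_k)\|^2]+\EE[\gamma_k(S_k)]\leq P_k$ must be shown to translate precisely into the two summed constraints of \eqref{eq:wsr_convex}, with $\mentry{Q}{i}{i}$ capturing the common-stream power at state realization $i$ and $\gamma_k$ the private-stream power. Establishing that the square-root recovery \eqref{eq:square_root_design} allocates exactly $\sum_i \mentry{Q}{i}{i}\PP(S_k=i)$ to the common message per TX, so that the recovered precoders remain feasible, is the delicate point; the rest is a direct transcription of Theorem~\ref{th:C_region_MIMO} into convex form.
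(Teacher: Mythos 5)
Your proposal is correct and follows essentially the same route as the paper, which simply invokes the technique of Proposition~\ref{cor:computation_C}: substitute the region of Theorem~\ref{th:C_region_MIMO}, replace the common-message part of $\dmat{\Sigma}(S_1,S_2)$ by the Gram matrix $\dmat{Q}=\dmat{F}^\herm\dmat{F}$ via $\rmat{S}_{\mathrm{eq}}\dmat{Q}\rmat{S}_{\mathrm{eq}}^\herm$, carry the $\gamma_k$ terms through, and recover feasible precoders from $(\dmat{Q}^\star)^{\frac{1}{2}}\dmat{E}(S_1,S_2)$. Your explicit treatment of the two-way reduction and of the power bookkeeping between $\mentry{Q}{i}{i}$ and $\gamma_k$ fills in details the paper omits, but introduces no new ideas beyond its argument.
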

\begin{IEEEproof}
The proof follows by the same technique as in the proof of Proposition \ref{cor:computation_C}. The details are omitted. 
\end{IEEEproof}

Problems \eqref{eq:C_distributedMIMO_convex} and \eqref{eq:wsr_convex} can be solved numerically via known convex optimization tools. A comprehensive discussion on the efficiency of various competing approaches is out of the scope of this work. Here, we point out two critical issues that should be taken into account in a practical system design. First, advanced stochastic optimization techniques may be required if the fading distribution $p(\dmat{S})$ is continuous\footnote{Note that all the results presented in this section do not require $\set{S}$ to be a discrete set, but only $|\set{S}_k|<\infty$ for $k=1,2$.}. Second, classical second-order methods as interior-point methods for semi-definite optimization typically scale badly with the dimension $d$ of $\dmat{Q}$. Hence, first-order methods may be more suitable whenever the cardinality of the CSIT alphabets $|\set{S}_k|$ is large. As a result of the algorithmic complexity stemming out of the above considerations (which are still very active research topics), we envision that feasible implementations of the proposed distributed precoding design should operate in an \textit{offline} fashion. Specifically, Problems \eqref{eq:C_distributedMIMO_convex} and \eqref{eq:wsr_convex} could be solved in a preliminary codebook design phase, while in the data transmission phase TX $k$ simply selects the precoder from the pre-designed codebook based on the received CSIT index $S_k$.

\subsection{Further Comments on the Optimal Number of Data Streams}
In this section we further elaborate on the optimal number of data streams $d'$ by addressing two important questions left open by Section \ref{ssec:MIMO_theorems}.
Theorem \ref{th:C_distributed_MIMO} shows that using a number of precoded data streams $d'= d$ is a sufficient condition for achievability of the common message capacity. However, we know that for some CSIT configurations (e.g., for centralized CSIT, as already discussed), $d'>2$ is not necessary. A first crucial question is whether there exists some distributed CSIT configuration for which such a condition is indeed necessary. In the next proposition we answer positively to this question.
\begin{proposition}
\label{cor:necessary}
For some $p(\dmat{S},s_2,s_2)$ and power constraints $(P_1,P_2)$, restricting $d'\leq N_T$ in problem \eqref{eq:C_distributed_MIMO}, where $N_T =2$ is the total number of TX antennas, leads to strictly suboptimal rates.
\end{proposition}
\begin{IEEEproof}
The proof is given in Appendix \ref{ssec:necessary_proof}, by showing the existence of a CSI distribution $p(\dmat{S},s_2,s_2)$ with binary CSIT $|\set{S}_1| = |\set{S}_2| = 2$ such that $d'\geq3$ is necessary for achieving $C_0(P_1,P_2)$.
\end{IEEEproof}

A second natural question is  whether the developed upper bound $d'\leq d = |\set{S}_1|+ |\set{S}_2|$ is tight, for some $p(\dmat{S},s_1,s_2)$. In the following we answer negatively to this question, by showing that indeed we can consider a slightly tighter upper bound. However, we firstly remark that obtaining tighter bounds is not trivial and is in fact related to the well-known low-rank matrix completion problem \cite{candes2010matrix}. Let us consider the matrix $\dmat{Q} \in \stdset{S}_+^d$ defined in the proof of Theorem \ref{th:C_distributed_MIMO}, or equivalently in Proposition \ref{cor:computation_C}, and its partition into blocks
\begin{equation}\label{eq:Q_structure}
\dmat{Q} = \begin{bmatrix}
\dmat{Q}_1 & \dmat{Q}_{12} \\
\dmat{Q}_{12}^\herm & \dmat{Q}_{2}
\end{bmatrix}, \quad \dmat{Q}_k \in \stdset{S}^{|\set{S}_k|}_+. 
\end{equation}
Informally, we recall that $\dmat{Q}$ collects the elements of the conditional input covariances $\dmat{\Sigma}(S_1,S_2)$ for all realizations of $(S_1,S_2)$.
By direct inspection of the capacity expression \eqref{eq:C_distributed_MIMO}, or equivalently of the objective in Problem \eqref{eq:C_distributedMIMO_convex}, we observe that the off-diagonal elements of the sub-matrices $\dmat{Q}_k$ do not contribute to the achievable rate, since they do not correspond to any element of any realization of $\dmat{\Sigma}(S_1,S_2)$.
Hence, by letting $\dmat{\tilde{Q}}$ be any optimal solution of \eqref{eq:C_distributedMIMO_convex}, the solution $\dmat{Q}^\star$ of the (non-convex) problem
\begin{equation}\label{eq:matrix_completion}
\begin{aligned}
        & \underset{\dmat{Q} \in\stdset{S}^d_+}{\text{minimize}}
        & & \mathrm{rank}(\dmat{Q}) \\
        & \text{subject to}
        & & \dmat{Q}_{12} = \dmat{\tilde{Q}}_{12} \\
        & & & \mentry{Q}{i}{i} = \mentry{\tilde{Q}}{i}{i}, \quad i = 1,\ldots,d
\end{aligned}
\end{equation}
is also an optimal solution of \eqref{eq:C_distributedMIMO_convex}, but where the off-diagonal elements of $\dmat{Q}_k$ have been optimized such that the rank is minimized. Since we have seen that the rank $r\leq d$ of $\dmat{Q}^\star$ corresponds to the dimension $d'$ of optimal distributed precoders (see Section \ref{ssec:computation}), establishing a tighter upper-bound on $d'$ can be cast into finding an upper-bound on the solution of \eqref{eq:matrix_completion}, which is an instance of a low-rank (semi-definite) matrix completion problem (see, e.g., \cite{candes2010matrix}).

To the best of the authors knowledge, non-trivial upper-bounds to problems of the type \eqref{eq:matrix_completion} remain elusive. Nevertheless, in the following proposition we provide a simple result showing the existence of a tighter upper-bound than $d = |\set{S}_1|+|\set{S}_2|$.
\begin{proposition}\label{prop:tighter_bound} The common message capacity $C_0(P_1,P_2)$ (resp. capacity region $\mathscr{C}(P_1,P_2)$) given by Theorem \ref{th:C_distributed_MIMO} (resp. Theorem \ref{th:C_region_MIMO}) is also achievable by letting  
\begin{equation*}
d' \leq |\set{S}_1|+|\set{S}_2| -1.
\end{equation*}
\end{proposition}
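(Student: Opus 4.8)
The plan is to establish the sharper bound by arguing directly on the matrix completion problem \eqref{eq:matrix_completion}: I will show that one can always complete the free entries so as to obtain a feasible covariance of rank at most $d-1$, which by the discussion preceding the statement (the rank of the completed $\dmat{Q}$ equals the attainable $d'$) yields the claim. The starting point is the observation already recorded before \eqref{eq:matrix_completion}, namely that among the entries of $\dmat{Q}$ in \eqref{eq:Q_structure} only the diagonal and the off-diagonal block $\dmat{Q}_{12}$ enter any realization of $\dmat{\Sigma}(s_1,s_2)$, hence the rate, while the off-diagonal entries of the diagonal blocks $\dmat{Q}_1,\dmat{Q}_2$ are completely free.

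I would first fix an optimal $\tilde{\dmat{Q}}\in\stdset{S}^d_+$ of Problem \eqref{eq:C_distributedMIMO_convex} (for the capacity region, the analogous covariance of the common-message part in \eqref{eq:wsr_convex}). If $\mathrm{rank}(\tilde{\dmat{Q}})\leq d-1$ there is nothing left to do, so I assume $\tilde{\dmat{Q}}\succ 0$. The key object is the real linear space $\mathcal{D}$ of Hermitian $d\times d$ matrices $\dmat{D}$ whose diagonal vanishes and whose block aligned with $\dmat{Q}_{12}$ vanishes, i.e.\ matrices supported only on the free off-diagonal entries of $\dmat{Q}_1$ and $\dmat{Q}_2$. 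By the selector structure of $\dmat{E}(S_1,S_2)$ in Proposition \ref{cor:computation_C}, adding any $\dmat{D}\in\mathcal{D}$ to $\tilde{\dmat{Q}}$ leaves the $2\times 2$ submatrix $\dmat{E}^\herm(s_1,s_2)\dmat{Q}\,\dmat{E}(s_1,s_2)=\dmat{\Sigma}(s_1,s_2)$ invariant for every $(s_1,s_2)$, hence leaves the objective of \eqref{eq:C_distributedMIMO_convex} (resp.\ \eqref{eq:wsr_convex}) and the power constraints unchanged.

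The crux is then a short perturbation argument. Provided $|\set{S}_1|\geq 2$ or $|\set{S}_2|\geq 2$, so that $\dmat{Q}_1$ or $\dmat{Q}_2$ carries at least one off-diagonal entry, I may pick a nonzero $\dmat{D}\in\mathcal{D}$. Being Hermitian with zero diagonal, $\dmat{D}$ is traceless, and since it is nonzero it must possess both a positive and a negative eigenvalue, i.e.\ it is indefinite. Moving along the ray $\tilde{\dmat{Q}}+t\dmat{D}$, $t\geq 0$, the indefiniteness of $\dmat{D}$ forces the smallest eigenvalue to $-\infty$, so $t^\star\eqdef\max\{t\geq 0:\tilde{\dmat{Q}}+t\dmat{D}\succeq 0\}$ is finite. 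By continuity of eigenvalues, $\dmat{Q}^\star\eqdef\tilde{\dmat{Q}}+t^\star\dmat{D}$ sits on the boundary of $\stdset{S}^d_+$ and is therefore singular, i.e.\ $\mathrm{rank}(\dmat{Q}^\star)\leq d-1$. Since $\dmat{Q}^\star-\tilde{\dmat{Q}}\in\mathcal{D}$, the matrix $\dmat{Q}^\star$ is feasible and achieves the same value; feeding it into the square-root construction \eqref{eq:square_root_design} produces distributed precoders of dimension $d'=\mathrm{rank}(\dmat{Q}^\star)\leq d-1$ achieving $C_0(P_1,P_2)$. Applying the identical reduction to the common-message covariance in the sum-rate constraint of \eqref{eq:wsr_convex} gives the statement for the whole region $\mathscr{C}(P_1,P_2)$, since the private streams $V_1,V_2$ contribute only the fixed gains $\gamma_1,\gamma_2$ and are untouched.

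I expect the difficulty to be one of care rather than depth: the two points needing verification are that $\dmat{D}\in\mathcal{D}$ truly leaves every realization $\dmat{\Sigma}(s_1,s_2)$ invariant, and that the PSD boundary is reached at a finite $t^\star$, which is exactly what tracelessness guarantees. The only genuinely degenerate case is $|\set{S}_1|=|\set{S}_2|=1$, where $\mathcal{D}=\{0\}$ and $\dmat{Q}$ is entirely determined, so no reduction below $d=2$ is possible; this corresponds to a non-adaptive (essentially centralized, fixed-precoder) configuration outside the regime of interest, and I would flag it explicitly rather than claim the bound there.
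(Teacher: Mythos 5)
Your argument is essentially the paper's own proof: both fix an optimal $\dmat{Q}$, perturb it only in the rate-irrelevant off-diagonal entries of the blocks $\dmat{Q}_1,\dmat{Q}_2$, and move along that direction until, by continuity of $\lambda_{\min}$ and the intermediate value theorem, the matrix reaches the boundary of $\stdset{S}^d_+$ and becomes rank-deficient while preserving feasibility and the objective. Your explicit justification that a finite $t^\star$ exists (a nonzero zero-diagonal Hermitian $\dmat{D}$ is traceless, hence indefinite) fills in a step the paper merely asserts, and your flagging of the degenerate case $|\set{S}_1|=|\set{S}_2|=1$, where no free entry exists, is a sensible caveat the paper leaves implicit.
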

\begin{proof}
The proof is given in Appendix \ref{ssec:proof_tighter_bound}.
\end{proof}
The above result is by no means satisfactory, since the dimensionality reduction is marginal for large CSIT alphabets. Informally, the main limitation of the above bound is that the proof optimizes only one of the (coupled) variables in \eqref{eq:matrix_completion}. However, note that the above bound is tight for the toy example considered in the proof of Proposition \ref{cor:necessary}. 

\subsection{Extension to Arbitrary Users and Antenna Configurations}
\label{ssec:extension}
Theorem \ref{th:C_distributed_MIMO} can be readily extended to $K$ TXs and arbitrary antenna configuration. By letting $N_k$ and $M$ be respectively the number of antennas at the $k$-th TX and at the RX, and by considering a fading state matrix $\rmat{S} \in \stdset{C}^{M\times (\sum_{k=1}^KN_k)}$ and distributed CSIT $(S_1,\ldots,S_K)$, $S_k = q_k(\rmat{S}) \in \set{S}_k$, $|\set{S}_k| < \infty$, it can be shown that the common message capacity is given by $C_0(P_1,\ldots,P_K) =$
\begin{equation}\label{eq:C0_multiantenna}
 \max_{\substack{\dmat{G}_k(S_k)\in \stdset{C}^{d'\times N_k}\\\EE\left[\|\dmat{G}_k(S_k)\|_{\mathrm{F}}^2\right]\leq P_k, \; \forall k}}\EE\left[\log\mathrm{det}\left(\dmat{I}+\rmat{S}\dmat{\Sigma}(S_1,\ldots,S_K)\rmat{S}^\herm\right)\right],
\end{equation}
where
\begin{equation*}
    \dmat{\Sigma}(S_1,\ldots,S_K) = \begin{bmatrix}
    \dmat{G}_1^\herm(S_1) \\ \vdots \\ \dmat{G}_K^\herm(S_K) 
\end{bmatrix}\begin{bmatrix}
    \dmat{G}_1(S_1) & \ldots & \dmat{G}_K(S_K) 
\end{bmatrix},
\end{equation*}
and where $d'\leq d \eqdef \sum_{k=1}^K N_k|\set{S}_k|$,
and it is achievable by distributed linear precoding of $d'$ i.i.d. Gaussian codewords.

The formal proof of the above statement follows from the same lines as for the proof of Theorem \ref{th:C_distributed_MIMO}, by considering $K$ functions $f_k: \set{U}\times \set{S}_k \to \stdset{C}^{N_k\times 1}$. The detailed steps do not provide additional intuitions, hence they are omitted. The convex reformulation of Proposition \ref{cor:computation_C} can be also similarly extended. Finally, as for Theorem \ref{th:C_region_MIMO}, the full-capacity region for $K$ private messages and a single common message can be achieved by superimposing standard MIMO MAC codes for $W_1,\ldots,W_K$ to the non-traditional distributed precoding technique achieving \eqref{eq:C0_multiantenna}.

\section{Numerical Examples}
\label{sec:examples}
\subsection{Channel with Additive Binary Inputs and State}
\label{ssec:num_example}
We consider the following channel
\begin{equation*}
    Y = X_1 + X_2 + S,
\end{equation*}
with binary inputs and state, i.e., $\set{X}_1 = \set{X}_2 = \set{S} = \{0,1\}$, and where $\set{Y}=\{0,1,2,3\}$. We do not consider input cost constraints. We further assume $S\sim \mathrm{Bernoulli}(q)$, no CSIR ($\set{S}_R=\emptyset$), and distributed CSIT $p(s_1,s_2|s) = p(s_1|s)p(s_2|s)$, where $p(s_k|s)$ is a binary symmetric channel with transition probability $\epsilon_k \in [0,0.5]$. Under the above model, the common message capacity (which coincides with the sum capacity) is given by 
\begin{equation*}
    C_0 = \max_{\substack{p(u)\\x_k = f_k(u,s_k)}}I(U;Y).
\end{equation*}
A (non-scalable) method for optimally solving the above optimization problem is to adapt to the considered distributed setting the original idea of coding over the \textit{alphabet of Shannon strategies} \cite{shannon1958channels,el2011network}, combined with classical results on the computation of the capacity of point-to-point channels \cite{cover2012elements}. More precisely, we proceed as follows:
\begin{enumerate}[leftmargin=*]
\item We build the alphabet of \textit{distributed Shannon strategies} by enumerating all the functions
\begin{equation*}
    t_u(s_1,s_2) = [t_{1,u}(s_1), t_{2,u}(s_2)],\quad t_{k,u}: \set{S}_k\to \set{X}_k,
\end{equation*}
where each function is indexed by $U$. There are $|\set{U}|=|\set{X}_1|^{|\set{S}_1|}|\set{X}_2|^{|\set{S}_2|} = 16$ such functions.
    \item We set $x_k = f_k(u,s_k) = t_{k,u}(s_k)$ and compute the equivalent state-less point-to-point channel 
    \begin{align*}
        &p(y|u) = \\
        &\sum_{\substack{x_1,x_2\\s,s_1,s_2}}&p(y|x_1,x_2,s)p(x_1|u,s_1)p(x_2|u,s_2)p(s,s_1,s_2).
    \end{align*}
    \item We run the Blahut-Arimoto algorithm for computing the capacity of the equivalent channel $p(y|u)$ \cite{cover2012elements}.
\end{enumerate}
Note that the above procedure is similar to the one outlined in \cite{keshet2008channel} for centralized settings. Furthermore, it can be readily generalized to arbitrary CSIR by simply considering an augmented output $\tilde{Y}\eqdef (Y,S_R)$.

In Fig.~\ref{fig:C_binary} we plot the capacity $C_0$ versus the CSIT quality at TX 2, for various choices of CSIT quality at TX 1, and for $q=0.5$. Note that $\epsilon_k = 0$ and $\epsilon_k = 0.5$ model respectively perfect and no CSIT at the $k$-th TX. Interestingly, the capacity of the system decreases with $\epsilon_2$ down to a flat regime in which any further decrease in quality does not matter, and the turning point depends on $\epsilon_1$. This can be interpreted as a regime in which the quality at one TX is so degraded that, although some CSIT is available, it does not allow for proper coordination with the better informed TX. Intuitively, it is important for the less informed TX to not act as unknown noise for the other TX. In fact, in the aforementioned regime it turns out that the optimal scheme at the less informed TX is to throw away completely its CSIT, making its behaviour not adaptive to the channel conditions but completely predictable by the more informed TX. 

\begin{figure}[t]
    \centering
    \includegraphics[width=0.8\columnwidth]{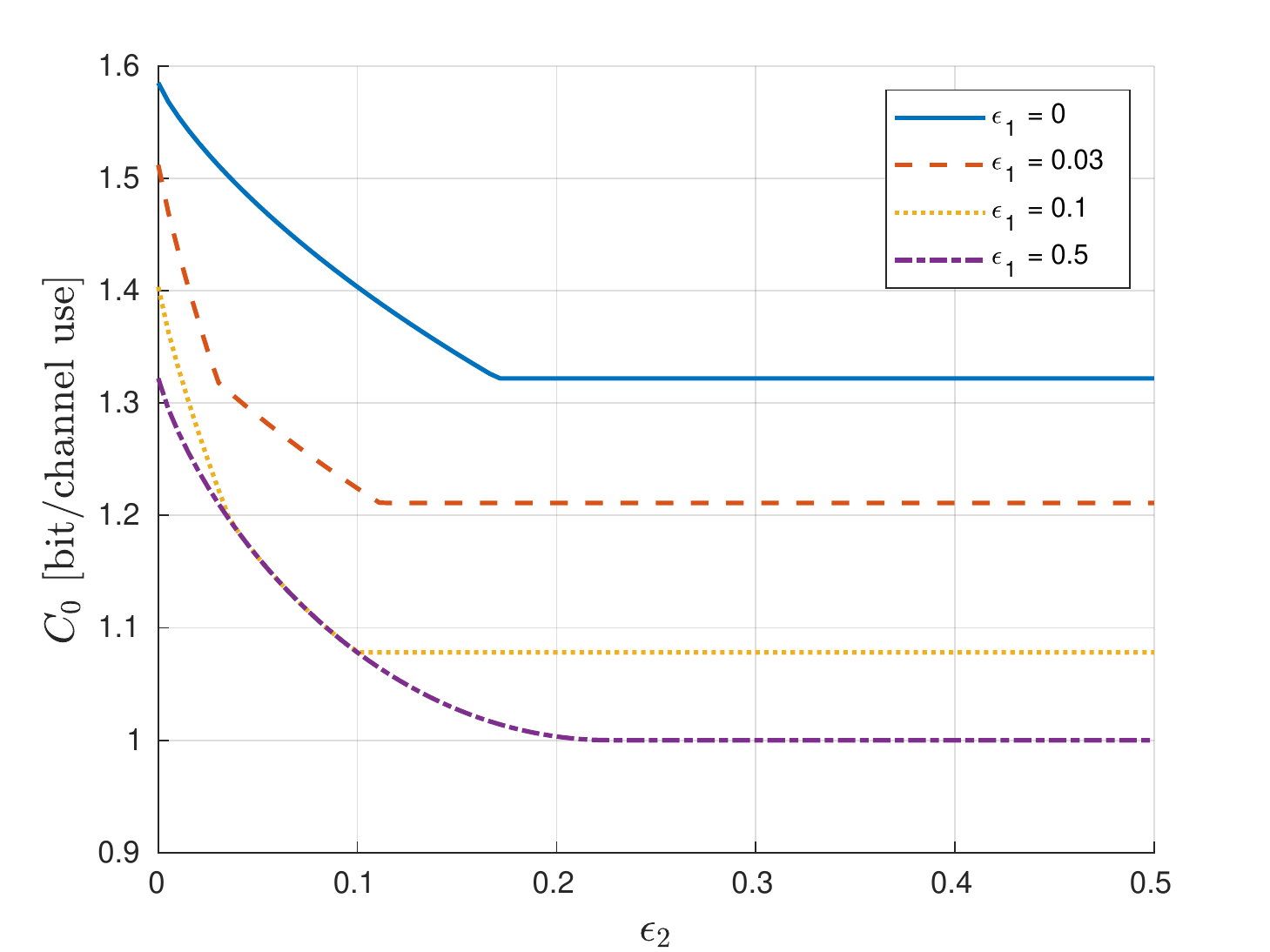}
    \caption{Capacity vs. CSIT distortion $\epsilon_2$ at TX 2, for various choices of CSIT distortion $\epsilon_1$ at TX 1.}
    \label{fig:C_binary}
\end{figure}

\subsection{Cooperative AWGN MIMO with Rayleigh Fading and Quantized Feedback}
In this section we simulate a practical cooperative MIMO channel with Rayleigh fading and with limited feedback rates. In particular, we let each element of $\rmat{S}$ to be i.i.d. $\mathcal{CN}(0,1)$, and we set for simplicity $P_1 = P_2 \eqdef \mathrm{SNR}$. The distributed CSIT configuration $p(\dmat{S},s_1,s_2)$ is given by two random quantizers with different rates $\beta_1,\beta_2$. 

More precisely, let $\set{S}_k = \{1,\ldots,2^{\beta_k}\}$ be the index set of a codebook $\{\hat{\dmat{S}}_{k,i} \}_{i=1}^{2^{\beta_k}}$ of randomly and independently generated codewords distributed as $p(\dmat{S})$. We then let $q_k(\dmat{S})$ to be a simple nearest neighbour vector quantizers in the Frobenius norm, i.e., $q_k(\dmat{S}) = \arg \min_{i \in \set{S}_k}\|\dmat{S}-\hat{\dmat{S}}_{k,i}\|_{\mathrm{F}}$. This scenario corresponds to an error-free feedback link from the RX to the $k$-th TX with limited rate of $\beta_k$ bits per channel realization. We set $\beta_1 = 4$ and $\beta_2 = 3$, which implies $d = |\set{S}_1|+|\set{S}_2| = 24$. We recall that the RX is assumed to have perfect CSIR.

We approximately solve Problem \eqref{eq:C_distributedMIMO_convex} through an off-the-shelf numerical solver for convex problems, by substituting $p(\dmat{S},s_1,s_2)$ with its empirical distribution 
$\hat{p}(\dmat{S},s_1,s_2) \eqdef \frac{1}{L}\sum_{l=1}^L\mathbbm{1}[(\dmat{S},s_1,s_2) = (\dmat{S}_{l},q_1(\dmat{S}_{l}),q_2(\dmat{S}_{l}))]$ obtained from $L= 1000$ i.i.d. samples $\{\dmat{S}_l\}_{l=1}^L$. This allows us to replace the expectation in \eqref{eq:C_distributedMIMO_convex} with a finite sum of $L$ convex functions. The capacity obtained is exact for a channel with state distribution equal to the empirical distribution $\hat{p}(\dmat{S},s_1,s_2)$, and approximates the capacity for ${p}(\dmat{S},s_1,s_2)$ as $L$ grows large. Furthermore, we repeat the above simulations by considering instead a single antenna at the RX, a setting denoted here as \textit{cooperative MISO}.

In Fig.~\ref{fig:C_sim} we plot the common message capacity versus SNR of a given instance of the considered channel model. We also plot the common message capacity for perfect CSIT at both TXs
\begin{equation*}
    C_0^{(\text{perf. CSIT})} = \max_{\substack{\dmat{\Sigma}(\rmat{S})\in\stdset{S}^2_+:\\ \EE[\mentry{\Sigma}{k}{k}(\rmat{S})] \leq P_k \; \forall k}} \EE\left[\mathrm{log} \mathrm{det} \left( \dmat{I}+\rmat{S}\dmat{\Sigma}(\rmat{S})\rmat{S}^\herm \right) \right], 
\end{equation*}
and for no CSIT
\begin{equation*}
    C_0^{(\text{no CSIT})} = \max_{\dmat{\Sigma}\in\stdset{S}^2_+: \; \mentry{\Sigma}{k}{k} \leq P_k \; \forall k}\EE\left[\mathrm{log} \mathrm{det} \left( \dmat{I}+\rmat{S}\dmat{\Sigma}\rmat{S}^\herm \right) \right].
\end{equation*}
We recall that these CSIT configurations are equivalent to a centralized $2\times 2$ MIMO system, hence we can simply use the classical MIMO results summarized e.g. in \cite{caire2007adaptive}, adapted to a per-antenna power constraint. For a fair comparison, these capacities are computed over the same empirical marginal distribution $\hat{p}(\dmat{S})$. As expected, for the MIMO case, the capacity gain given by distributed CSIT w.r.t no CSIT follows the well-known \textit{beamforming gain} trend of the perfect CSIT case, i.e., it vanishes in the high SNR regime. Similarly, for the MISO case, this gain converges to a constant power offset.

\label{ssec:MIMO_simulation}
\begin{figure}
    \centering
  \subfloat[$2\times 2$ Cooperative MIMO]{
       \includegraphics[width=0.8\columnwidth]{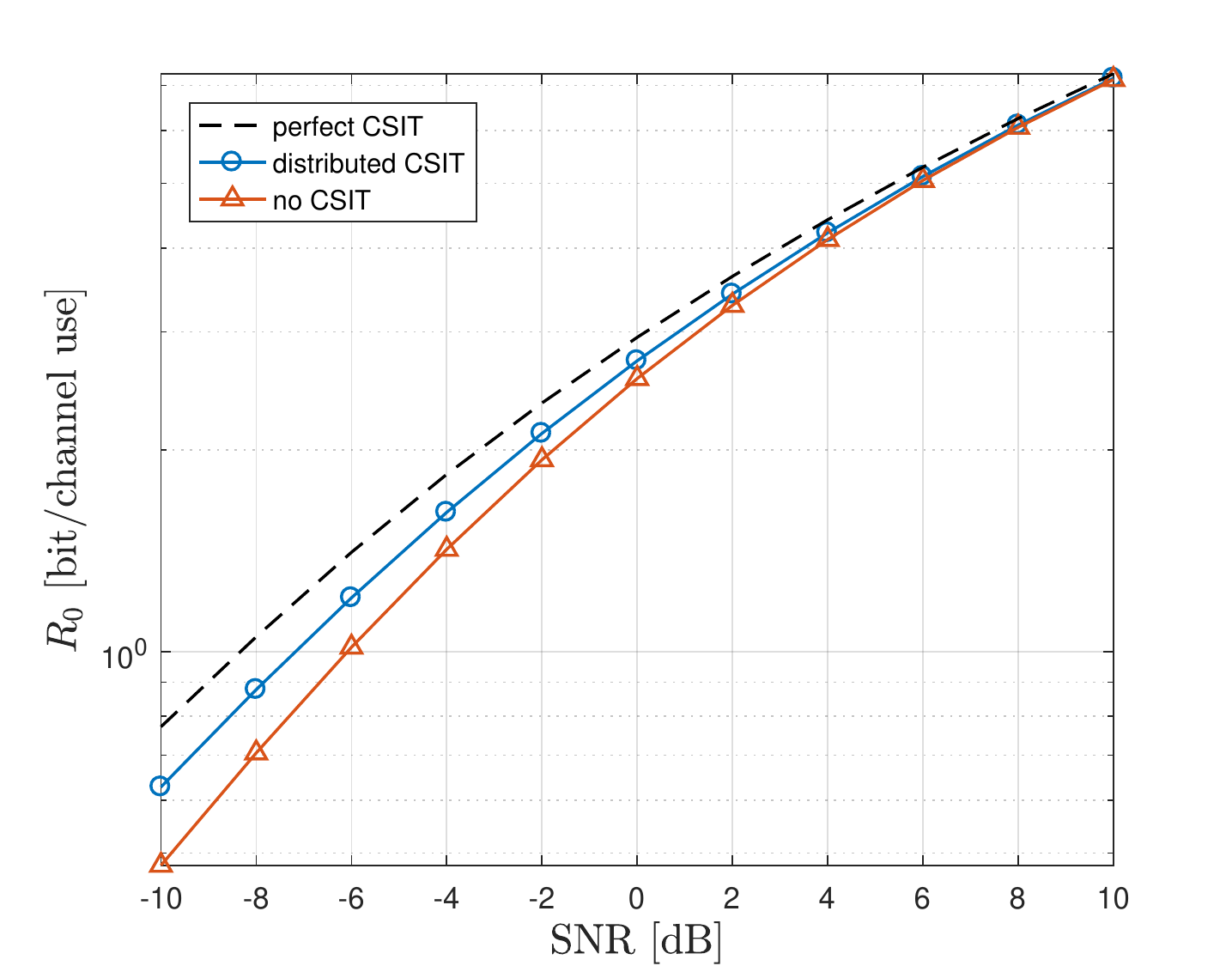}}
    \hfill \\
  \subfloat[$2\times 1$ Cooperative MISO]{
        \includegraphics[width=0.8\columnwidth]{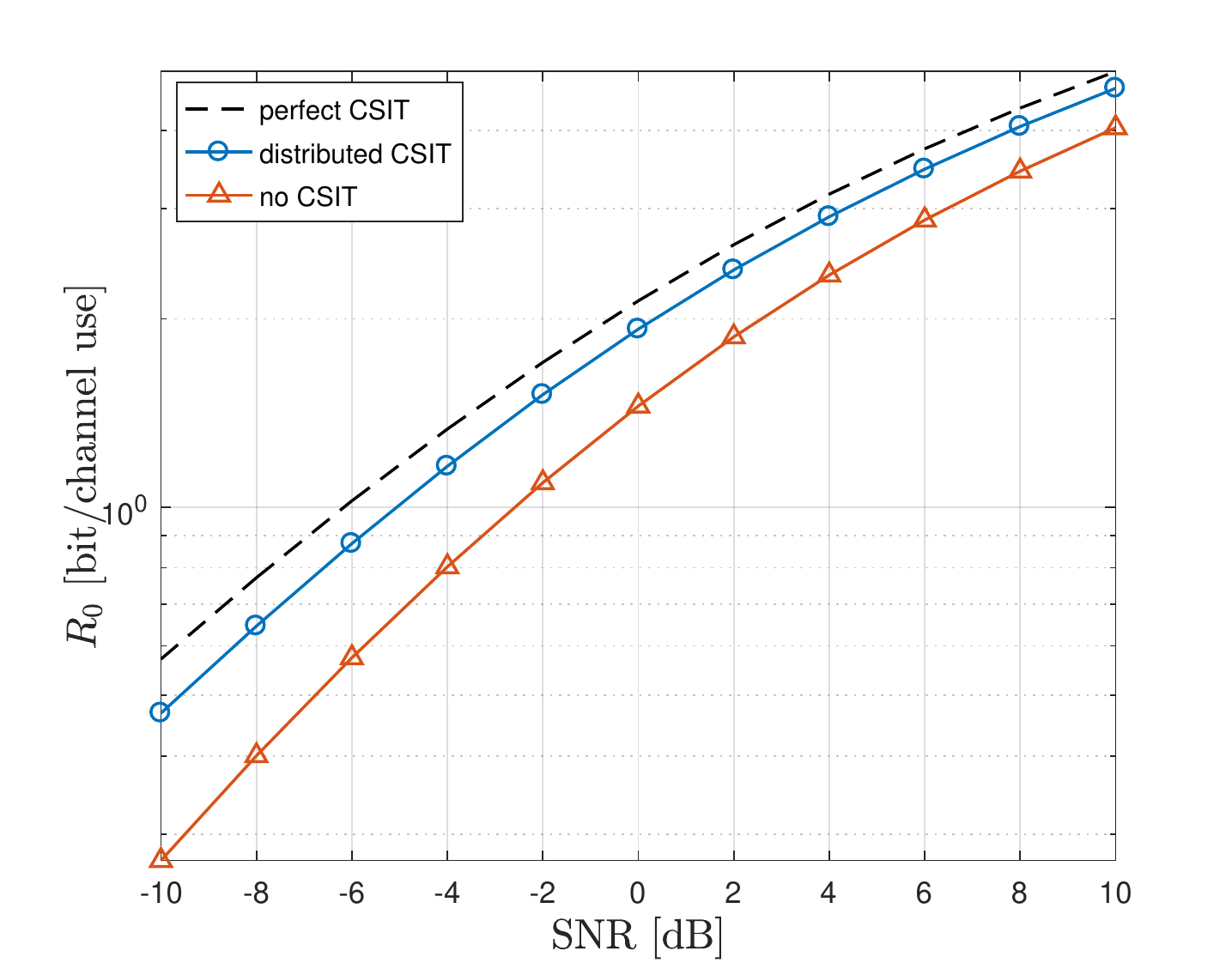}}
    \\
  \caption{Common message capacity \eqref{eq:C_distributed_MIMO} vs SNR for (a) 2 RX antennas and (b) single RX antenna.}
  \label{fig:C_sim} 
\end{figure}

\section{Conclusion}
In this paper, we studied a two-user memoryless state-dependent multiple access channel under the assumption that
causal and distributed CSIT is available and messages can be partly or entirely shared prior to the data transmission. 
 We characterized the common message capacity of this channel and demonstrated that it is optimal to encode the message
as a function of current CSIT only based on Shannon strategies. We provide an insightful example over an additive binary-input quaternary-output channel with binary states showing that, interestingly, in some cases there is a threshold in terms of CSIT quality below which one encoder shall not use its channel knowledge. For a special case when CSIT is a deterministic function of CSIR, 
the full capacity region of a common message and two private messages is also characterized.  This last result is specialized to a practically relevant cooperative MIMO fading channel operating in FDD mode such that CSIT is acquired via an explicit feedback from the receiver. The cooperative MIMO example surprisingly reveals that in a distributed CSIT setup the optimal number of data streams shall not be restricted to the minimum number of transmit antennas. This is in contrast to the classical MIMO design under the centralized CSIT assumption. 

Interesting open problems include the evaluation of the minimum message cooperation (i.e., the minimum rate $R_0$) required such that the sum capacity is achievable via Shannon strategies, and the extension of the coding ideas derived for the cooperative MIMO case to systems with multiple receivers.  

\appendix
\section{Proofs - General Results}
\label{sec:proofs_general}
\subsection{Proof of Theorem \ref{th:C_common}}
\label{sec:proof_th1}
\begin{proof}[\unskip\nopunct]
\hspace{2\parindent}\textit{Converse:}
Let us define $U_i = (W_0,S_1^{i-1},S_2^{i-1})$. We construct an upper-bound by assuming that past CSIT realizations $(S_1^{i-1},S_2^{i-1})$ are available at both encoders. Hence, we assume that $X_{1i}$ and $X_{2i}$ are functions of $ (W_0,S_1^i,S_2^{i-1}) = (U_i,S_{1i})$ and $ (W_0,S_2^i,S_1^{i-1}) = (U_i,S_{2i})$ respectively. Note that  $U_i$ is independent of $(S_i,S_{1i},S_{2i},S_{Ri})$. Consider for brevity $\tilde{Y}_i=(Y_i,S_{Ri})$. We then have:
\begingroup
\allowdisplaybreaks
\begin{align}\label{eq:conv_rate}
\begin{split}
		nR_0  = & H(W_0) \\
							 = & I(W_0; \tilde{Y}^n) + H(W_0|\tilde{Y}^n) \\
							 \overset{(a)}{\leq} & I(W_0; \tilde{Y}^n) + n\epsilon_n \\
							 = & \sum_{i=1}^n I(W_0;\tilde{Y}_i|\tilde{Y}^{i-1}) + n\epsilon_n \\
							 = &\sum_{i=1}^n H(\tilde{Y}_i|\tilde{Y}^{i-1},) - H(\tilde{Y}_i|W_0,\tilde{Y}^{i-1}) + n\epsilon_n \\
							 \leq &\sum_{i=1}^n H(\tilde{Y}_i|\tilde{Y}^{i-1})  - H(\tilde{Y}_i|U_i,\tilde{Y}^{i-1}) + n\epsilon_n \\
							 \overset{(b)}{=} & \sum_{i=1}^n H(\tilde{Y}_i|\tilde{Y}^{i-1}) - H(\tilde{Y}_i|U_i) + n\epsilon_n \\
							 \leq &\sum_{i=1}^n H(\tilde{Y}_i) - H(\tilde{Y}_i|U_i) + n\epsilon_n \\
							 = &\sum_{i=1}^n I(U_i;\tilde{Y}_i) + n\epsilon_n\\
							 \overset{(c)}{=} &\sum_{i=1}^n I(U_i;Y_i|S_{Ri}) + n\epsilon_n\\
	 \end{split}
\end{align}
where $(a)$ follows from Fano's inequality ($\lim_{n\to \infty}\epsilon_n = 0$), $(b)$ follows from the Markov chain 
$\tilde{Y}^{i-1} \to (W_0,S_1^{i-1},S_2^{i-1}) \to \tilde{Y}_i$, and $(c)$ is because $S_{Ri}$ is independent of $U_i$.
The code must also satisfy the input cost constraints
\begin{equation}\label{eq:conv_cost}
    P_1 \geq \EE\left[\dfrac{1}{n}\sum_{i=1}^n b_1(X_{1i})\right], \quad P_2 \geq \EE\left[\dfrac{1}{n}\sum_{i=1}^n b_2(X_{2i})\right].
\end{equation}
We combine the bounds in \eqref{eq:conv_rate} and \eqref{eq:conv_cost} by means of a time-sharing variable $Q$ uniformly distributed in $\{1,\dots,n\}$ and independent of everything else, and by letting $U = (U_Q,Q)$, $X_1=X_{1Q}$, $X_2=X_{2Q}$ $Y=Y_Q$, $S=S_Q$, $S_1=S_{1Q}$, $S_2=S_{2Q}$, $S_{R}=S_{RQ}$. Note that the resulting distribution $p_{Y,X_1,X_2,S,S_1,S_2,S_R,U}$ factors as $
    p(y|x_1,x_2,s)\mathbbm{1}[x_1-f_1(u,s_1)]\mathbbm{1}[x_2-f_2(u,s_2)] p(s,s_1,s_2,s_R)p(u)$, where $\mathbbm{1}[\cdot]$ is an indicator function. With these identifications, we obtain
\begin{align*}
    R_0&\leq I(U;Y|S_R, Q) + \epsilon_n\\
    &\leq H(Y|S_R)-H(Y|U,S_R,Q)+ \epsilon_n\\
    &\overset{(c)}{=} H(Y|S_R) -H(Y|U,S_R)+ \epsilon_n\\
    &= I(U;Y|S_R) + \epsilon_n,
\end{align*}
where (c) follows from the Markov chain $Q\to (U,S_R) \to Y$, and
$P_1\geq \EE[b_1(X_1)]$, $P_2\geq \EE[b_2(X_2)]$. Hence, we finally have $R_0 \leq C_0(P_1,P_2) + \epsilon_n$.
\end{proof}
\endgroup

\begin{proof}[\unskip\nopunct]
\hspace{2\parindent}\textit{Achievability:} Achievability follows from standard arguments, hence a formal proof is omitted. An informal yet intuitive proof can be obtained by the classical \textit{physical device} argument of Shannon \cite{shannon1958channels,el2011network}. More precisely, by fixing the functions $f_k(u_k,s_k)$, we can consider a new state-less and memoryless MAC with messages $W_0,W_1,W_2$, inputs $U_1,U_2$, and output $(Y,S_R)$. For a given $f_k(u_k,s_k)$, the capacity region of this auxiliary channel is simply achievable by Slepian-Wolf coding for the MAC with common and independent messages \cite{slepian1973coding}, which gives the achievable region in Lemma \ref{lem:MAC}. 

The finite cardinality of $U_1,U_2$ follows directly by Shannon argument, which states that using Shannon strategies corresponds to coding over an augmented input alphabet of functions $\set{S}_k \to \set{X}_k$ of size $|\set{X}_k|^{|\set{S}_k|}$, indexed by $U_k$ \cite{el2011network}. Hence we can consider $|\set{U}_k|\leq |\set{X}_k|^{|\set{S}_k|}$. The finite cardinality of $\set{U}_0$ follows by a simple application of the support lemma \cite[Appendix~C]{el2011network} applied to the Slepian-Wolf region of the auxiliary channel, which gives \cite[p.~344]{el2011network}
\begin{equation*}
    |\set{U}_0| \leq \min\{|\set{X}_1|^{|\set{S}_1|}|\set{X}_2|^{|\set{S}_2|}+2, |\set{Y}||\set{S}_R|+3 \}.
\end{equation*}
Finally, the expression in Theorem \ref{th:C_common} can be obtained by specializing the proof of Lemma \ref{lem:MAC} to the transmission of a common message only, i.e. by letting $R_1=R_2=0$, and by identifying $U_0 = U_1 = U_2 \eqdef U$.
\end{proof}

\subsection{Proof of Theorem \ref{th:MAC_det}}
\label{sec:proof_th2}
\begin{proof}[\unskip\nopunct]
\hspace{2\parindent}\textit{Achievability:}
The proof follows the same lines as in \cite{jafar2006capacity}. Achievability builds on Lemma \ref{lem:MAC}, where we rewrite the mutual information terms as follows. By focusing first on the sum-rate, we observe that
\begin{align*}
    I&(U_1,U_2;Y|S_R) =\\
    &=H(Y|S_R)-H(Y|U_1,U_2,S_R) \\
               &\overset{(a)}{=} H(Y|S_R)-H(Y|U_1,U_2,S_1,S_2,S_R) \\ 
               &\overset{(b)}{=} 
               H(Y|S_R)-H(Y|X_1,X_2,U_1,U_2,S_1,S_2,S_R) \\
               &\overset{(c)}{=} 
               H(Y|S_R)-H(Y|X_1,X_2,S_R) \\
               &= I(X_1,X_2;Y|S_R),
\end{align*}
where $(a)$ comes from $(S_1,S_2) = (q_1(S_R),q_2(S_R))$, $(b)$ is because $(X_1,X_2)$ is a function of $(S_1,S_2,U_1,U_2)$, and $(c)$ because of the Markov chain $(S_1,S_2,U_1,U_2)\to(X_1,X_2,S_R)\to Y$. Similarly, one can show
\begin{align*}
    I&(U_1;Y|S_R,U_2,U_0) = H(Y|X_2,U_2,U_0,S_2,S_R)\\
    &\quad -H(Y|X_1,X_2,U_1,U_2,U_0,S_1,S_2,S_R) \\ 
               &\overset{(a)}{=} H(Y|X_2,U_0,S_R)-H(Y|X_1,X_2,U_0,S_R) \\
               &= I(X_1;Y|X_2,U_0,S_R),
\end{align*}
where $(a)$ follows from $(S_2,U_2)\to(U_0,X_2,S_R)\to Y$ and $(S_1,S_2,U_1,U_2)\to(U_0,X_1,X_2,S_R)\to Y$, and
\begin{align*}
    I(U_2;Y|S_R,U_1,U_0) &= I(X_2;Y|X_1,U_0,S_R),\\
    I(U_1,U_2;Y|S_R,U_0) &= I(X_1,X_2;Y|U_0,S_R).
\end{align*}

Finally, by the functional representation lemma \cite[Appendix~B]{el2011network} and since $U_1$ and $U_2$ do not appear in the bounds given by \eqref{eq:MAC_det} and proven above, designing 
\begin{equation*}
    p(u_0)p(u_1|u_0)p(u_2|u_0)\mathbbm{1}[x_1-f_1(s_1,u_1)]\mathbbm{1}[x_2-f_2(s_2,u_2)],
\end{equation*}
where $\mathbbm{1}[\cdot]$ is an indicator function, is equivalent to designing $p(u)p(x_1|s_1,u)p(x_2|s_2,u)$ (replacing $U_0$ with $U$).
\end{proof}

\begin{proof}[\unskip\nopunct]
\hspace{2\parindent}\textit{Converse:}
Let us define $U_{i} = (W_0,S_1^{i-1},S_2^{i-1})$. We construct an outer bound by assuming that past CSIT realizations $(S_1^{i-1},S_2^{i-1})$ are available at both encoders. Hence, we assume that $X_{1i}$ and $X_{2i}$ are functions of $ (W_0,W_1,S_1^i,S_2^{i-1}) = (W_1,U_{i},S_{1i})$ and $ (W_0,W_2,S_2^i,S_1^{i-1}) = (W_2,U_{i},S_{2i})$ respectively. Note that  $U_{i}$ is independent of $(S_i,S_{1i},S_{2i},S_{Ri})$. We then bound
\begingroup
\allowdisplaybreaks
\begin{align*}
		nR_1 \overset{(a)}{\leq} & I(W_1; Y^n,S_R^n|W_0,W_2) + n\epsilon_n \\
			  \overset{(b)}{=} & I(W_1; Y^n|W_0,W_2,S_R^n) + n\epsilon_n \\
			 = & \sum_{i=1}^n I(W_1;Y_i|Y^{i-1},W_0, W_2,S_R^n) + n\epsilon_n \\
			 = &\sum_{i=1}^n H(Y_i|W_0,W_2,Y^{i-1},S_R^n)\\
			 & - H(Y_i|W_0,W_1,W_2,Y^{i-1},S_R^n) + n\epsilon_n \\
			 \overset{(c)}{=} &\sum_{i=1}^n H(Y_i|W_0,W_2,S_1^{i-1},S_2^i,Y^{i-1},S_R^n)\\
			 & - H(Y_i|W_0,W_1,W_2,S_1^i,S_2^i,Y^{i-1},S_R^n) + n\epsilon_n \\
			 \overset{(d)}{=} & \sum_{i=1}^n H(Y_i|W_2,X_{2i},U_{i},S_{2i},Y^{i-1},S_R^n) \\
			 &- H(Y_i|X_{1i},X_{2i},U_{i},S_{Ri}) + n\epsilon_n \\
			 \leq &\sum_{i=1}^n H(Y_i|X_{2i},U_{i},S_{Ri}) \\
			 &- H(Y_i|X_{1i},X_{2i},U_{i},S_{Ri}) + n\epsilon_n \\
			 = &\sum_{i=1}^n I(X_{1i};Y_i|X_{2i},U_{i},S_{Ri}) + n\epsilon_n,
\end{align*}
\endgroup
where (a) follows from Fano's inequality ($\lim_{n\to \infty}\epsilon_n = 0$), (b) from the independence of $W_1$ and $S_R^n$, $(c)$ from $(S_{i1},S_{2i})$ being a function of $S_{Ri}$, and (d) from the Markov chain $(W_0,W_1,W_2,S_1^i,S_2^i,Y^{i-1},\{S_{Rj}\}_{j\neq i})\to (X_{1i},X_{2i},S_{Ri}) \to Y_i$. Similarly, we have 
\begin{align*}
        nR_2 \leq & \sum_{i=1}^n I(X_{2i};Y_i|X_{1i},U_{i},S_{Ri}) + n\epsilon_n,\\
		n(R_1+R_2) \leq &\sum_{i=1}^n I(X_{1i},X_{2i};Y_i|U_{i},S_{Ri}) + n\epsilon_n,\\
		nR_{\mathrm{sum}} \leq &\sum_{i=1}^n I(X_{1i},X_{2i};Y_i|S_{Ri}) + n\epsilon_n.
\end{align*}
The code must also satisfy the input cost constraints
\begin{equation*}
   P_1 \geq \EE\left[\dfrac{1}{n}\sum_{i=1}^n b_1(X_{1i})\right], \quad P_2 \geq \EE\left[\dfrac{1}{n}\sum_{i=1}^n b_2(X_{2i})\right].
\end{equation*}
We combine all the bounds by means of a time-sharing variable $Q$ uniformly distributed in $\{1,\dots,n\}$ and independent of everything else, and by letting $U = (U_{Q},Q)$, $X_1=X_{1Q}$, $X_2=X_{2Q}$ $Y=Y_Q$, $S=S_Q$, $S_1=S_{1Q}$, $S_2=S_{2Q}$, $S_{R}=S_{RQ}$. Note that the resulting distribution $p_{Y,X_1,X_2,S,S_1,S_2,S_R,U}$ factors as
\begin{equation*}
    p(y|x_1,x_2,s)p(x_1|s_1,u)p(x_2|s_2,u)p(s,s_1,s_2,s_R)p(u)
\end{equation*} as required. With these identifications, we readily obtain
\begin{equation*}
\begin{gathered}
    R_1 \leq I(X_1;Y|X_2,U,S_R) + \epsilon_n \\
    R_2 \leq I(X_2;Y|X_1,U,S_R) + \epsilon_n \\
    R_1+R_2 \leq I(X_1,X_2;Y|U,S_R) + \epsilon_n \\
    \begin{aligned}
    R_{\mathrm{sum}}&\leq I(X_1,X_2;Y|S_R, Q) + \epsilon_n\\
    & \leq I(X_1,X_2;Y|S_R) + \epsilon_n, 
    \end{aligned}\\
    P_1 \geq \EE[b_1(X_1)], P_2 \geq \EE[b_2(X_2)]. 
\end{gathered}
\end{equation*}
\end{proof}

\subsection{Proof of Proposition \ref{prop:outer_cond_indep}}
\label{ssec:proof_outer_cond_indep}
Let us define $U_{0i} :=(W_0,S_R^{i-1})$, $U_{1i}:= (W_1,S_1^{i-1},U_{0i})$, and $U_{2i}:= (W_2,S_2^{i-1},U_{0i})$. Note that $X_{1i}$ and $X_{2i}$ are functions of $(U_{1i},S_{1i})$ and $(U_{2i},S_{2i})$ respectively, and, due to the Markov chain $S_{1i} \to S_{Ri} \to S_{2i}$, we also have $U_{1i}\to U_{0i} \to U_{2i}$ as required. By Fano's inequality ($\lim_{n\to \infty}\epsilon_n = 0$), and by following similar steps as in the previous sections, we obtain
\begin{align*}
&n(R_1+R_2)\\
&\leq I(W_1,W_2;Y^n,S_R^n|W_0) + n\epsilon_n \\
		   &= I(W_1,W_2;Y^n|W_0,S_R^n) + n\epsilon_n \\
		   &= \sum_{i=1}^nI(W_1,W_2;Y_i|Y^{i-1},W_0,S_R^n) +n\epsilon_n \\	
		   &\leq \sum_{i=1}^nI(W_1,W_2,S_1^{i-1},S_2^{i-1};Y_i|Y^{i-1},W_0,S_R^n) +n\epsilon_n 	\\
& = \sum_{i=1}^n I(U_{1i},U_{2i};Y_i|Y^{i-1},U_{0i},S_{Ri}^n) +n\epsilon_n   \\
& \leq \sum_{i=1}^n I(U_{1i},U_{2i};Y_i|U_{0i},S_{Ri}) +n\epsilon_n,
\end{align*}
where the last inequality comes from the memoryless property of the channel. Following similar lines one can prove 
\begin{equation*}
nR_{\mathrm{sum}} \leq \sum_{i=1}^nI(U_{1i},U_{2i};Y_i|S_{Ri}) +n\epsilon_n,
\end{equation*}
which can be combined with the bound on $R_1+R_2$ and the power constraints by means of the usual time-sharing step.

\section{Proofs - FDD Cooperative MIMO Channel with Fading}\label{sec:proofs_MIMO}

\subsection{Proof of Theorem \ref{th:C_region_MIMO}}
\label{ssec:proof_capacity_region}
We construct an outer bound $\mathscr{C}_o(P_1,P_2)$ by following similar steps as in \cite{liu2006gausscommon}, but starting from the single-letter formulation of $\mathscr{C}(P_1,P_2)$ given by Theorem \ref{th:MAC_det} extended to continuous alphabets similarly to \cite{el2011network,caire1999capacity,caire2007adaptive}.
We consider the following applications of the \textit{maximum differential entropy lemma} \cite[p. 21]{el2011network}, adapted to the complex field. From the rightmost bound in \cite[Eq. 2.6]{el2011network}, we have
\begin{align*}
h(\rvec{Y}|\rmat{S}=\dmat{S}) &\leq \log\left((\pi e)^2\mathrm{det}\left(\EE\left[\rvec{Y}\rvec{Y}^\herm | \rmat{S}=\dmat{S} \right]\right)\right)\\
&= \log\left((\pi e)^2\mathrm{det}\left(\dmat{S}\dmat{\Sigma}\left( q_1(\dmat{S}), q_2(\dmat{S})\right)\dmat{S}^\herm + \dmat{I}\right)\right),
\end{align*}
where $\dmat{\Sigma}(S_1,S_2) = \EE\left[\rvec{X}\rvec{X}^\herm | S_1 , S_2\right]$. Then, from the bound in \cite[Eq. 2.7]{el2011network}, we have
\begin{align*}
h(\rvec{Y}&|U,\rmat{S}=\dmat{S}) \leq  \log\Big((\pi e)^2\mathrm{det}\Big(\EE\Big[(\rvec{Y}-\EE[\rvec{Y}|U,\rmat{S}=\dmat{S}])\\
&\quad \times (\rvec{Y}-\EE[\rvec{Y}|U,\rmat{S}=\dmat{S}])^\herm |\rmat{S}=\dmat{S} \Big]\Big)\Big)\\
&= \log\left((\pi e)^2\mathrm{det}\left(\dmat{S}\dmat{\Gamma}\left( q_1(\dmat{S}), q_2(\dmat{S})\right)\dmat{S}^\herm + \dmat{I}\right)\right),
\end{align*}
where $\dmat{\Gamma}(S_1,S_2) =$
\begin{equation*}
\dmat{\Sigma}(S_1,S_2)-\EE\left[\EE[\rvec{X}|U,S_1,S_2]\EE[\rvec{X}^\herm|U,S_1,S_2] | S_1, S_2 \right].
\end{equation*} 
By the structure of the input distribution, we now observe that $\dmat{\Gamma}(S_1,S_2) = \mathrm{diag}(\gamma_1(S_1),\gamma_2(S_2))$ and that, $\forall (s_1,s_2)\in\set{S}_1\times\set{S}_2$, 
\begin{equation}\label{eq:input_covariance}
\begin{split}
&\dmat{\Sigma}(s_1,s_2) =  \begin{bmatrix}
\gamma_1(s_1) & 0 \\ 0 & \gamma_2(s_2)
\end{bmatrix} \\
&\quad + \begin{bmatrix}
\EE[|\mu_1(U,s_1)|^2] & \EE[\mu_1(U,s_1)\mu_2^\star(U,s_2)] \\ \EE[\mu_2(U,s_2)\mu_1^\star(U,s_1) ] & \EE[|\mu_2(U,s_2)|^2]
\end{bmatrix},
\end{split}
\end{equation}
where we define the functions
\begin{align*}
\mu_k(U,S_k) &\eqdef \EE[X_k|U,S_k]\\
\gamma_k(S_k) &\eqdef \EE\left[|X_k|^2|S_k\right] - \EE\left[|\mu_k(U,S_k)|^2|S_k\right] \geq 0.
\end{align*}
By following similar steps for $h(\rvec{Y}|X_1,U,\rmat{S}=\dmat{S})$ and $h(\rvec{Y}|X_2,U,\rmat{S}=\dmat{S})$, and by applying the resulting bounds to the mutual information terms in Theorem \ref{th:MAC_det}, we obtain 
\begin{equation}\label{eq:Co}
\begin{gathered}
R_1 \leq \EE\left[\log\left(1+\gamma_1(S_1)\|\rmat{S}\dvec{e}_1\|^2 \right)\right],\\
R_2 \leq \EE\left[\log\left(1+\gamma_2(S_2)\|\rmat{S}\dvec{e}_2\|^2 \right)\right], \\
R_1+R_2 \leq \EE\left[\log\mathrm{det}\left(\dmat{I}+\rmat{S}\mathrm{diag}(\gamma_1(S_1),\gamma_2(S_2))\rmat{S}^\herm\right)\right],\\
R_0+R_1+R_2 \leq \EE\left[\log\mathrm{det}\left(\dmat{I}+\rmat{S}\dmat{\Sigma}(S_1,S_2)\rmat{S}^\herm\right)\right].
\end{gathered}
\end{equation}
The outer bound $\mathscr{C}_{o}(P_1,P_2)$ is then established by taking the convex hull of the union of all rate triples $(R_0,R_1,R_2)$ satisfying \eqref{eq:Co}
for some $p(x_1|s_1,u)p(x_2|s_2,u)p(u)$ such that $\EE[|X_k|^2]\leq P_k$.

Similarly to the proof of Theorem \ref{th:C_distributed_MIMO}, it can be now shown that every $\dmat{\Sigma}(S_1,S_2)$ as in \eqref{eq:input_covariance} induced by any $p(x_1|s_1,u)p(x_2|s_2,u)p(u)$ can also be obtained by the scheme in \eqref{eq:achievable_scheme_MAC}, i.e., via superposition of linearly precoded Gaussian codes. The key point is showing that the second term in the RHS of \eqref{eq:input_covariance} can be obtained via distributed linear precoders of dimension $d$. This follows by the same technique used in the proof of Theorem \ref{th:C_distributed_MIMO}, by simply replacing the functions $f_k(u,s_k)$ with $\mu_k(u,s_k)$. Finally, since the inputs are conditionally Gaussian, standard arguments \cite{el2011network} show that \eqref{eq:achievable_scheme_MAC} attains the bound $\mathscr{C}_o(P_1,P_2)$, without time sharing (i.e., we can omit the convex hull operation).

\subsection{Proof of Proposition \ref{cor:necessary}}\label{ssec:necessary_proof}
The proof is split for the sake of clarity in the following three steps:
\begin{enumerate}[leftmargin=*]
    \item We fix a specific conditional input covariance matrix $\dmat{\Sigma}^\star(S_1,S_2)$, and we show that it is achievable via distributed linear precoding if and only if $d'>2$.
    \item We construct a specific $p(\dmat{S},s_1,s_2)$ such that $\dmat{\Sigma}^\star(S_1,S_2)$ is the unique optimal solution to Problem \eqref{eq:C_distributed_MIMO}.
    \item We combine the above steps to show that there exists a channel for which $d' \leq 2$ leads to strictly suboptimal rates.
\end{enumerate}

\textit{Step 1}: Consider binary D-CSIT alphabets, i.e., $\set{S}_1 = 
\set{S}_2 = \{0,1\}$, and let $\dmat{\Sigma}^\star(S_1,S_2)$ be given by
\begin{align}\label{eq:Sigma_star}
\begin{split}
    &\dmat{\Sigma}^\star(0,0) = \dmat{I}, \quad \dmat{\Sigma}^\star(1,0) = \dmat{I}, \\
    &\dmat{\Sigma}^\star(0,1) = \bigl[\begin{smallmatrix}1 & 0.6\\0.6 & 1
  \end{smallmatrix}\bigr], \quad \dmat{\Sigma}^\star(1,1) = \bigl[\begin{smallmatrix}1 & 0.8\\0.8 & 1
  \end{smallmatrix}\bigr].
  \end{split}
\end{align} 
Define the set $\set{G}(\tilde{d})$ of conditional input covariance matrices $\dmat{\Sigma}(S_1,S_2)$ which are achievable via distributed linear precoders of maximal dimension $\tilde{d}$, i.e., 
\begin{align}\label{eq:implementable}
    \set{G}(\tilde{d})&\eqdef 
  \begin{cases}
    \dmat{\Sigma}(S_1,S_2) \in \stdset{S}^2_+ \quad \text{s.t.}\\
    \dmat{\Sigma}(S_1,S_2) = \begin{bmatrix}
    \dvec{g}_1^\herm(S_1) \\ \dvec{g}_2^\herm(S_2) 
\end{bmatrix} \begin{bmatrix}
    \dvec{g}_1(S_1) & \dvec{g}_2(S_2)
\end{bmatrix},\\
    \dvec{g}_1(S_1), \: \dvec{g}_2(S_2) \in \stdset{C}^{d'}, \;  d'\leq \tilde{d}
  \end{cases}
\end{align}
Clearly, $\set{G}(d_1) \subseteq \set{G}(d_2)$, for $d_1 \leq d_2$. The following lemma holds:
\begin{lemma}\label{lem:Sigma_star}
$\dmat{\Sigma}^\star(S_1,S_2) \in \set{G}(3)$, and $\dmat{\Sigma}^\star(S_1,S_2) \notin \set{G}(2)$.
\end{lemma}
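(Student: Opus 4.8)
The plan is to translate membership in $\set{G}(\tilde d)$ into the existence of four unit vectors $\dvec{g}_1(0),\dvec{g}_1(1),\dvec{g}_2(0),\dvec{g}_2(1)\in\stdset{C}^{d'}$, $d'\le\tilde d$, whose pairwise inner products reproduce the entries of $\dmat{\Sigma}^\star$. Reading off \eqref{eq:Sigma_star}, the unit diagonals force $\|\dvec{g}_1(s_1)\|=\|\dvec{g}_2(s_2)\|=1$ for all $s_1,s_2$, while the off-diagonal entries impose $\dvec{g}_1^\herm(0)\dvec{g}_2(0)=\dvec{g}_1^\herm(1)\dvec{g}_2(0)=0$, $\dvec{g}_1^\herm(0)\dvec{g}_2(1)=0.6$, and $\dvec{g}_1^\herm(1)\dvec{g}_2(1)=0.8$. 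The within-transmitter products $\dvec{g}_1^\herm(0)\dvec{g}_1(1)$ and $\dvec{g}_2^\herm(0)\dvec{g}_2(1)$ remain free, since they coincide with the off-diagonal entries of the blocks $\dmat{Q}_1,\dmat{Q}_2$ in \eqref{eq:Q_structure} and never enter any realization of $\dmat{\Sigma}^\star$. Equivalently, $\dmat{\Sigma}^\star(S_1,S_2)\in\set{G}(\tilde d)$ if and only if the partially specified Hermitian matrix with these prescribed entries admits a positive-semidefinite completion of rank at most $\tilde d$, the smallest achievable rank being the minimal precoder dimension $d'$.

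To show $\dmat{\Sigma}^\star(S_1,S_2)\in\set{G}(3)$ I would simply exhibit a feasible assignment in $\stdset{R}^3\subset\stdset{C}^3$. Choosing $\dvec{g}_2(0)=\dvec{e}_1$, $\dvec{g}_1(0)=\dvec{e}_2$, $\dvec{g}_1(1)=\dvec{e}_3$, and $\dvec{g}_2(1)=0.6\,\dvec{e}_2+0.8\,\dvec{e}_3$, a direct check confirms that all four vectors are unit-norm and that the four cross products evaluate to $0$, $0$, $0.6$, and $0.8$, exactly as required. This settles the first claim with $d'=3$.

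The impossibility $\dmat{\Sigma}^\star(S_1,S_2)\notin\set{G}(2)$ is the crux of the argument, and I would prove it by contradiction through a dimension count. Suppose such unit vectors existed in $\stdset{C}^2$. The two orthogonality constraints place both $\dvec{g}_1(0)$ and $\dvec{g}_1(1)$ in the orthogonal complement of the nonzero vector $\dvec{g}_2(0)$, which is one-dimensional over $\stdset{C}$; hence $\dvec{g}_1(1)=\alpha\,\dvec{g}_1(0)$ for some $\alpha\in\stdset{C}$ with $|\alpha|=1$. But then $\dvec{g}_1^\herm(1)\dvec{g}_2(1)=\overline{\alpha}\,\dvec{g}_1^\herm(0)\dvec{g}_2(1)$, whose modulus equals $|\dvec{g}_1^\herm(0)\dvec{g}_2(1)|=0.6$, contradicting $\dvec{g}_1^\herm(1)\dvec{g}_2(1)=0.8$. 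The essential difficulty, and the whole point of the example, lies exactly here: the distributed structure couples $\dvec{g}_1(0)$ and $\dvec{g}_1(1)$ through their common orthogonality to $\dvec{g}_2(0)$, and a two-dimensional ambient space is too rigid to let a single transmitter present two distinct correlation levels ($0.6$ and $0.8$) to the two states of the other transmitter. A third dimension, unused in the centralized case, is therefore indispensable.
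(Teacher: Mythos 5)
Your proposal is correct and follows essentially the same route as the paper: the same explicit rank-3 construction (up to a permutation of the basis vectors) and the same contradiction for $d'=2$ based on both $\dvec{g}_1(0)$ and $\dvec{g}_1(1)$ being forced into the one-dimensional orthogonal complement of $\dvec{g}_2(0)$. Your handling of the collinearity over $\stdset{C}$ via a unit-modulus scalar $\alpha$ and comparison of moduli is in fact slightly more careful than the paper's ``$\dvec{g}_1(0)=\pm\dvec{g}_1(1)$'' shortcut, which strictly speaking only covers the real case.
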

\begin{IEEEproof}
For $\dmat{\Sigma}^\star(S_1,S_2)$ to be achievable, we need to find precoders $\dvec{g}_k(S_k)$ s.t.
\begin{equation*}
    \begin{cases}
    \dvec{g}_1^\herm(0)\dvec{g}_2(0) = 0, \quad \dvec{g}_1^\herm(0)\dvec{g}_2(1) = 0.6, \\ \dvec{g}_1^\herm(1)\dvec{g}_2(0) = 0, \quad \dvec{g}_1^\herm(1)\dvec{g}_2(1) = 0.8, \\
    \|\dvec{g}_1(0)\| = \|\dvec{g}_1(1)\| = \|\dvec{g}_2(0)\| = \|\dvec{g}_2(1)\| = 1.
    \end{cases}
\end{equation*}
For $\dvec{g}_k(S_k)$ of dimension $d'=2$, the above system has no solution. In fact, we need to simultaneously satisfy
\begin{equation*}
    \begin{cases}
    \dvec{g}_1(0) \perp \dvec{g}_2(0), \\
    \dvec{g}_1(1) \perp \dvec{g}_2(0), \\
    \|\dvec{g}_1(0)\| = \|\dvec{g}_1(1)\| =  1,
    \end{cases}
\end{equation*}
which, for $d'=2$, implies $\dvec{g}_1(0) = \pm \dvec{g}_1(1)$, and hence leads to the following contradiction $
        0.6 = \dvec{g}_1^\herm(0)\dvec{g}_2(1) = \pm \dvec{g}_1^\herm(1)\dvec{g}_2(1) = \pm 0.8$.
Instead, one can check that $\dmat{\Sigma}^\star(S_1,S_2)$ is readily obtained by letting $d'=3$ and
\begin{align*}
   \dvec{g}_1(0) &= \begin{bmatrix}1 & 0 & 0
  \end{bmatrix}, \quad 
   \dvec{g}_1(1) = \begin{bmatrix}0 & 1 & 0
  \end{bmatrix},\\
  \dvec{g}_2(0) &= \begin{bmatrix}0 & 0 & 1
  \end{bmatrix}, \quad 
   \dvec{g}_2(1) = \begin{bmatrix}0.6 & 0.8 & 0
  \end{bmatrix}.
\end{align*}
\end{IEEEproof}

\textit{Step 2:} Consider the following rewriting of Problem \eqref{eq:C_distributed_MIMO}, by letting again  $\set{S}_1 = 
\set{S}_2 = \{0,1\}$ (hence $d = 4$), and unitary power constraint $P_1 = P_2 = 1$:
\begin{equation}
    C_0 = \max_{\dmat{\Sigma}\in \set{P}\cap \set{G}(4)}\EE\left[\log\mathrm{det}\left(\dmat{I}+\rmat{S}\dmat{\Sigma}(S_1,S_2)\rmat{S}^\herm\right)\right],\label{eq:rewriting_MIMO}
\end{equation}
where $\set{G}(4)$ is given by \eqref{eq:implementable}, and where
\begin{equation*}
    \set{P}\eqdef \{ \dmat{\Sigma}(S_1,S_2) \in \stdset{S}^2_+ \;|\; \EE[\Sigma_{k,k}(S_1,S_2)] \leq 1, \: k =1,2 \}
\end{equation*}
is the per-TX power constraint. Note that $\dmat{\Sigma}^\star(S_1,S_2)$ belongs to the feasible set, i.e.  $\dmat{\Sigma}^\star(S_1,S_2) \in \set{P}\cap \set{G}(4)$.
\begin{lemma}\label{lem:optimal_unique}
There exist some $p(\dmat{S},s_1,s_2)$ such that $\dmat{\Sigma}^\star(S_1,S_2)$ given by \eqref{eq:Sigma_star} is the unique optimal solution for problem \eqref{eq:rewriting_MIMO}.
\end{lemma}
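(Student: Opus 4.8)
The plan is to exhibit a concrete distribution $p(\dmat{S},s_1,s_2)$ and to establish optimality and uniqueness through a strict-concavity argument. I would take the conditional laws $p(\dmat{S}\mid s_1,s_2)$ to be point masses on four \emph{distinct}, full-rank matrices $\dmat{H}_{s_1,s_2}\in\stdset{C}^{2\times 2}$, one per realization of $(s_1,s_2)\in\{0,1\}^2$, and fix any full-support marginal $p(s_1,s_2)$. The quantizers $q_1,q_2$ are then well defined on this four-point support by reading off the corresponding index, which is the reason the $\dmat{H}_{s_1,s_2}$ must be pairwise distinct. With deterministic channels the objective in \eqref{eq:rewriting_MIMO} becomes $\sum_{s_1,s_2}p(s_1,s_2)\log\det\bigl(\dmat{I}+\dmat{H}_{s_1,s_2}\dmat{\Sigma}(s_1,s_2)\dmat{H}_{s_1,s_2}^\herm\bigr)$.

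To make uniqueness transparent I would first \emph{relax} the feasible set: drop the Gram-consistency coupling imposed by $\set{G}(4)$ and keep only the per-pair constraint $\dmat{\Sigma}(s_1,s_2)\in\stdset{S}^2_+$ together with the averaged power constraints $\set{P}$. Since every diagonal block of a Gram matrix is itself positive semidefinite, this relaxed set contains $\set{P}\cap\set{G}(4)$; moreover $\dmat{\Sigma}^\star\in\set{G}(3)\subseteq\set{G}(4)$ by Lemma \ref{lem:Sigma_star} and meets the power constraints (all diagonal entries equal one, and $P_1=P_2=1$), so it remains feasible for \eqref{eq:rewriting_MIMO}. Hence, if $\dmat{\Sigma}^\star$ is the \emph{unique} maximizer of the relaxed problem, it is a fortiori the unique maximizer of \eqref{eq:rewriting_MIMO}. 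The payoff is that the relaxed feasible set is manifestly convex and its objective is a sum of functions of the now-independent block variables $\dmat{\Sigma}(s_1,s_2)$; because each $\dmat{H}_{s_1,s_2}$ is invertible, the map $\dmat{\Sigma}\mapsto\dmat{I}+\dmat{H}_{s_1,s_2}\dmat{\Sigma}\dmat{H}_{s_1,s_2}^\herm$ is affine and injective and $\log\det$ is strictly concave on $\stdset{S}^2_{++}$, so each summand, and hence the whole objective, is strictly concave. A strictly concave objective over a convex set has at most one maximizer.

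It then remains to choose the $\dmat{H}_{s_1,s_2}$ so that $\dmat{\Sigma}^\star$ satisfies the first-order (KKT) conditions. Writing $\dmat{M}_{s_1,s_2}=\dmat{I}+\dmat{H}_{s_1,s_2}\dmat{\Sigma}^\star(s_1,s_2)\dmat{H}_{s_1,s_2}^\herm$ and $\dvec{h}_k=\dmat{H}_{s_1,s_2}\dvec{e}_k$, the derivative of the $(s_1,s_2)$-summand with respect to the off-diagonal entry is $\dvec{h}_1^\herm\dmat{M}_{s_1,s_2}^{-1}\dvec{h}_2$, and with respect to the $k$-th diagonal entry is $\dvec{h}_k^\herm\dmat{M}_{s_1,s_2}^{-1}\dvec{h}_k$. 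Since each $\dmat{\Sigma}^\star(s_1,s_2)\succ 0$, the per-pair semidefiniteness is slack and the off-diagonal entries are locally free, so stationarity requires $\dvec{h}_1^\herm\dmat{M}_{s_1,s_2}^{-1}\dvec{h}_2=0$, i.e.\ that the prescribed correlations $0,0,0.6,0.8$ be the conditional maximizers of $\log\det$. Introducing multipliers $\lambda_1,\lambda_2$ for the two power constraints, the diagonal conditions reduce to the common water-level requirement that $\dvec{h}_k^\herm\dmat{M}_{s_1,s_2}^{-1}\dvec{h}_k$ be equal across all four pairs. For the two pairs with target covariance $\dmat{I}$ I would use diagonal unitary channels such as $\dmat{H}_{0,0}=\dmat{I}$ and $\dmat{H}_{1,0}=\mathrm{diag}(1,u)$ with $|u|=1$, $u\neq 1$, which automatically make the uncorrelated input optimal and give diagonal sensitivities equal to $1/2$; for the two correlated pairs I would solve the coupled system $\dvec{h}_1^\herm\dmat{M}^{-1}\dvec{h}_2=0$ and $\dvec{h}_1^\herm\dmat{M}^{-1}\dvec{h}_1=\dvec{h}_2^\herm\dmat{M}^{-1}\dvec{h}_2=1/2$ for a full-rank $\dmat{H}$. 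Once all four sensitivities equal $1/2$, the KKT conditions hold with $\lambda_1=\lambda_2=1/2$ for \emph{any} full-support $p(s_1,s_2)$, and strict concavity upgrades stationarity to global uniqueness.

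The main obstacle is precisely this last construction: producing full-rank, pairwise-distinct channels that \emph{simultaneously} make the prescribed off-diagonal values $0.6$ and $0.8$ the unconstrained conditional optima \emph{and} equalize the diagonal power-sensitivities, so that the all-ones diagonal is optimal. This is a small nonlinear system in the entries of $\dmat{H}_{0,1}$ and $\dmat{H}_{1,1}$; with eight real degrees of freedom per channel against four real scalar conditions I expect generic solvability and would either exhibit explicit matrices or invoke a continuity argument starting from the uncorrelated case. A secondary point, not strictly needed once the relaxation is used, is that $\dmat{\Sigma}^\star$ actually lies in the interior of $\set{G}(4)$: its diagonal and cross data admit a strictly positive-definite completion $\dmat{Q}\succ 0$ (e.g.\ by taking the free inner product $\dvec{g}_1^\herm(0)\dvec{g}_1(1)$ to be a small positive number), so the Gram constraint is inactive and only the power constraints enter the optimality conditions of the original problem.
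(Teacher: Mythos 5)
Your overall strategy coincides with the paper's: relax the distributed-precoding (Gram) constraint $\set{G}(4)$ so that the blocks $\dmat{\Sigma}(s_1,s_2)$ decouple, get uniqueness from strict concavity of $\log\det$ composed with an injective affine map (full-rank channels), and then reverse-engineer the channel realizations so that $\dmat{\Sigma}^\star$ satisfies the optimality conditions of the relaxed problem; since $\dmat{\Sigma}^\star$ is feasible for the original problem, the relaxation is tight. Your feasibility, convexity, and strict-concavity steps are correct, and your insistence that the four channel matrices be pairwise distinct (so that $q_1,q_2$ are well-defined deterministic functions of $\dmat{S}$) is a point handled more carefully than in the paper's own construction.

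The genuine gap is the step you yourself flag as the ``main obstacle'': you never actually produce full-rank matrices $\dmat{H}_{0,1},\dmat{H}_{1,1}$ satisfying the coupled system $\dvec{h}_1^\herm\dmat{M}^{-1}\dvec{h}_2=0$ and $\dvec{h}_1^\herm\dmat{M}^{-1}\dvec{h}_1=\dvec{h}_2^\herm\dmat{M}^{-1}\dvec{h}_2=1/2$, where $\dmat{M}$ itself depends on $\dmat{H}$. A degrees-of-freedom count (eight real unknowns versus four real equations) does not establish solvability of a nonlinear system, and the alternative ``continuity argument from the uncorrelated case'' is not developed; without an explicit witness the lemma is not proved. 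The paper closes exactly this hole with its ``reverse water-filling'' construction, which you can adopt verbatim: write $\dmat{\Sigma}^\star(s_1,s_2)=\dmat{V}\,\mathrm{diag}(\xi_1,\xi_2)\,\dmat{V}^\herm$ and take the Hermitian channel $\dmat{H}=\bigl(\dmat{V}\,\mathrm{diag}\bigl(\tfrac{1}{\nu-\xi_1},\tfrac{1}{\nu-\xi_2}\bigr)\dmat{V}^\herm\bigr)^{\frac{1}{2}}$ with $\nu=2>1.8=\max_k\xi_k$. Because $\dmat{H}$ shares eigenvectors with $\dmat{\Sigma}^\star(s_1,s_2)$, a one-line computation gives $\dmat{H}^\herm\bigl(\dmat{I}+\dmat{H}\dmat{\Sigma}^\star\dmat{H}^\herm\bigr)^{-1}\dmat{H}=\tfrac{1}{\nu}\dmat{I}=\tfrac{1}{2}\dmat{I}$, which is precisely your stationarity system with $\lambda_1=\lambda_2=1/2$; full-rankness holds since $\xi_k<\nu$, and all four resulting matrices are pairwise distinct (the two identity-covariance states need the small perturbation you already propose). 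The paper reaches the same endpoint slightly differently---it further relaxes to a sum-power constraint so the relaxed problem becomes textbook spatio-temporal water-filling with a single water level---but with the explicit channels above your per-TX KKT argument goes through and the two routes are essentially equivalent.
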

\begin{IEEEproof}
The main idea is to build such CSI distribution by ``reversing'' a spatio-temporal water-filling algorithm which gives as unique optimal solution the conditional input covariance $\dmat{\Sigma}^\star(S_1,S_2)$. We now provide the details.

Define a uniformly distributed random state $\rmat{S}$ taking values in the finite alphabet  $\set{S} =  \{\dmat{S}_1,\dmat{S}_2,\dmat{S}_3,\dmat{S}_4\}$, and let the CSIT be given by the functions
\begin{align*}
    s_1 = q_1(\dmat{S}) &= \begin{cases}
0 \text{ for } \dmat{S} \in \{\dmat{S}_1,\dmat{S}_2\} \\
1 \text{ otherwise } \\
\end{cases} \\
s_2 = q_2(\dmat{S}) &= \begin{cases}
0 \text{ for } \dmat{S} \in \{\dmat{S}_1,\dmat{S}_3\} \\
1 \text{ otherwise } \\
\end{cases} 
\end{align*}
The capacity of such a channel can be upper bounded by
\begin{align}
    C_0 &= \max_{\dmat{\Sigma}\in \set{P}\cap \set{G}(4)}\EE\left[\log\mathrm{det}\left(\dmat{I}+\rmat{S}\dmat{\Sigma}(S_1,S_2)\rmat{S}^\herm\right)\right],\\
     &\leq \max_{\dmat{\Sigma}\in \set{P}}\EE\left[\log\mathrm{det}\left(\dmat{I}+\rmat{S}\dmat{\Sigma}(S_1,S_2)\rmat{S}^\herm\right)\right],\label{eq:half_relaxed_MIMO}\\
    & \leq \max_{\dmat{\Sigma}\in \set{P}'}\EE\left[\log\mathrm{det}\left(\dmat{I}+\rmat{S}\dmat{\Sigma}(S_1,S_2)\rmat{S}^\herm\right)\right], \label{eq:relaxed_MIMO}
\end{align}
where
\begin{equation*}
    \set{P}'\eqdef \{ \dmat{\Sigma}(S_1,S_2) \in \stdset{S}^2_+ \;|\; \mathrm{tr}\{\EE[\dmat{\Sigma}(S_1,S_2)]\} \leq 2 \},
\end{equation*}
is the set obtained by relaxing the per-TX power constraint $\set{P}$ to a total power constraint ($\set{P}\subseteq \set{P}')$. Inequalities  \eqref{eq:half_relaxed_MIMO} and \eqref{eq:relaxed_MIMO} are obtained respectively by relaxing the achievability via distributed linear precoding and the power constraint.

Problem \eqref{eq:relaxed_MIMO} turns out to be an instance of a classical (centralized) MIMO capacity problem, where the optimal solution is given by the well-known spatio-temporal water-filling algorithm. More precisely, let us rewrite \eqref{eq:relaxed_MIMO} as
\begin{equation}
\begin{split}
    &\max_{\dmat{\Sigma}(S_1,S_2)\in \set{P}'}\sum_{i=1}^4p(\dmat{S}_i)\log\mathrm{det}\left(\dmat{I}+\dmat{S}_i\dmat{\Sigma}(q_1(\dmat{S}_i),q_2(\dmat{S}_i))\dmat{S}_i^\herm\right)\\
    &= \max_{\{\dmat{\Sigma}_i \} \in \tilde{\set{P}}'}\dfrac{1}{4}\sum_{i=1}^4\log\mathrm{det}\left(\dmat{I}+\dmat{S}_i\dmat{\Sigma}_i\dmat{S}^\herm\right),\end{split}\label{eq:relaxed_MIMO_2}
\end{equation}
where we defined $\dmat{\Sigma}_i \eqdef \dmat{\Sigma}(q_1(\dmat{S}_i),q_2(\dmat{S}_i)) $, and where 
\begin{equation*}
    \tilde{\set{P}}' = \Bigl\{ \dmat{\Sigma}_i \in \stdset{S}^2_+  \; | \; \frac{1}{4}\sum_{i=1}^4 \mathrm{tr}\{\dmat{\Sigma}_i\} \leq 2 \Bigr\}.
\end{equation*}
A well-known application of the Hadamard's inequality gives the following upper bound in terms of the channel eigen-decompositions $\dmat{S}_i^\herm\dmat{S}_i = \dmat{V}_i\dmat{\Lambda}_i \dmat{V}_i^\herm$, $\dmat{\Lambda}_i = \mathrm{diag}(\lambda_{i,1},\lambda_{i,2})$ 
\begin{equation*}
\begin{split}
    &\max_{\{\dmat{\Sigma}_i \} \in \tilde{\set{P}}'}\dfrac{1}{4}\sum_{i=1}^4\log\mathrm{det}\left(\dmat{I}+\dmat{S}_i\dmat{\Sigma}_i\dmat{S}_i^\herm\right) \\
    &\leq \max_{\substack{\xi_{i,k}\geq 0\\ \frac{1}{4}\sum_{i,k}\xi_{i,k}\leq 2}}\dfrac{1}{4}\sum_{i=1}^4\sum_{k=1}^2\log(1+\lambda_{i,k}\xi_{i,k}),
    \end{split}
\end{equation*}
where the optimal $\xi_{i,k}$ are given by the water-filling conditions
\begin{equation*}
    \xi_{i,k} = \max\Bigl\{\nu - \dfrac{1}{\lambda_{i,k}},0 \Bigr\}, \quad i = 1,\ldots,4, \quad k=1,2,
\end{equation*}
\begin{equation*}
    \sum_{i,k}\max\Bigl\{\nu - \dfrac{1}{\lambda_{i,k}},0 \Bigr\} = 8,
\end{equation*}
and where equality is achieved for
\begin{equation*}
    \dmat{\Sigma}_i=\dmat{V}_i\dmat{\Xi}_i \dmat{V}_i^\herm, \quad \dmat{\Xi}_i = \mathrm{diag}(\xi_{i,1},\xi_{i,2}),\quad i=1,\ldots,4.
\end{equation*}

Consider the conditional covariance $\dmat{\Sigma}^\star(S_1,S_2)$ given by \eqref{eq:Sigma_star}. Note that $\dmat{\Sigma}^\star(S_1,S_2) \in \set{P}'$, i.e., it satisfies the total power constraint. We wish to construct $\set{S} = \{ \dmat{S}_i \}$ such that $\dmat{\Sigma}^\star(S_1,S_2)$ is the unique optimal solution for \eqref{eq:relaxed_MIMO}. This can be done by ``reversing'' the MIMO water-filling algorithm described above. More precisely, let us consider $\dmat{\Sigma}^\star_i \eqdef \dmat{\Sigma}^\star(q_1(\dmat{S}_i),q_2(\dmat{S}_i)) $ and their eigen-decompositions 
\begin{equation*}
    \dmat{\Sigma}^\star_i=\dmat{V}^\star_i\dmat{\Xi}_i^\star \dmat{V}^{\star\herm}_i, \quad \dmat{\Xi}_i^\star = \mathrm{diag}(\xi^\star_{i,1},\xi^\star_{i,2}).
\end{equation*}
We construct now $\set{S}$ by letting
\begin{equation*}
    \dmat{S}_i = (\dmat{V}^\star_i\dmat{\Lambda}^\star_i\dmat{V}^{\star\herm}_i)^{\frac{1}{2}},\quad i=1,\ldots,4 
\end{equation*}
where the eigenvalues $\dmat{\Lambda}^\star_i = \mathrm{diag}(\lambda^\star_{i,1},\lambda^\star_{i,2})$ are given by
\begin{equation*}
    \lambda_{i,k}^\star = \frac{1}{\nu^\star - \xi_{i,k}^\star}, \quad i = 1,\ldots,4, \quad k=1,2,
\end{equation*}
and any choice of $\infty > \nu^\star > \max_{i,k}\xi_{i,k}^\star = 1.8$. 

By construction, $\dmat{\Sigma}^\star(S_1,S_2)$ is an optimal solution for \eqref{eq:relaxed_MIMO}. Uniqueness of the solution can be proven by contradiction as in  \cite[Section~III.A]{boyd2004waterfilling}, or directly by the strict concavity of $\sum_{i=1}^4\log\mathrm{det}\left(\dmat{I}+\dmat{S}_i\dmat{\Sigma}_i\dmat{S}_i^\herm\right)$ in $ \{\dmat{\Sigma}_i \succeq 0\}$, which is a direct consequence of the strict concavity of $\log \mathrm{det}(\bm{A})$ in $\bm{A} \succ \vec{0}$ and of the positive definiteness of $\dmat{S}_i$ by construction. 
Finally, since $\dmat{\Sigma}^\star(S_1,S_2) \in \set{P}\cap \set{G}(4)$, \eqref{eq:relaxed_MIMO} and \eqref{eq:half_relaxed_MIMO} are satisfied with equality.
\end{IEEEproof}

\textit{Step 3:}
The proof is now concluded by combining Lemma \ref{lem:Sigma_star} and Lemma \ref{lem:optimal_unique}, yielding $\exists \: p(\dmat{S},s_1,s_2)$ such that
\begin{equation*}
\begin{split}
    &\arg \max_{\dmat{\Sigma}(S_1,S_2)\in \set{P}\cap \set{G}(4)}\EE\left[\log\mathrm{det}\left(\dmat{I}+\rmat{S}\dmat{\Sigma}(S_1,S_2)\rmat{S}^\herm\right)\right]\\
    &\overset{(a)}{=} \{\dmat{\Sigma}^\star(S_1,S_2) \} \overset{(b)}{\notin} \set{G}(2),
    \end{split}
\end{equation*}
where $(a)$ follows from Lemma \ref{lem:optimal_unique}, and $(b)$ from Lemma \ref{lem:Sigma_star}, which implies that $\exists \: p(\dmat{S},s_1,s_2)$ such that
\begin{equation*}
\begin{split}
    &\max_{\dmat{\Sigma}\in \set{P}\cap \set{G}(4)}\EE\left[\log\mathrm{det}\left(\dmat{I}+\rmat{S}\dmat{\Sigma}(S_1,S_2)\rmat{S}^\herm\right)\right]\\
    &>\max_{\dmat{\Sigma}\in \set{P}\cap \set{G}(2)}\EE\left[\log\mathrm{det}\left(\dmat{I}+\rmat{S}\dmat{\Sigma}(S_1,S_2)\rmat{S}^\herm\right)\right].
\end{split}
\end{equation*}

\subsection{Proof of Proposition \ref{prop:tighter_bound}}
\label{ssec:proof_tighter_bound}
\begin{proof}[\unskip\nopunct] 
The proof follows by manipulating an off-diagonal entry of the sub-matrices $\dmat{Q}_1,\dmat{Q}_2$ given by \eqref{eq:Q_structure} until $\dmat{Q}$ becomes rank-deficient. We recall that varying these entries has no influence on the achievable rates, provided that the positive semi-definiteness of $\dmat{Q}$ is maintained. Consider the symmetric matrix 
\begin{equation*}
\tilde{\dmat{Q}}(t) \eqdef \dmat{Q}+ t\left[\begin{smallmatrix} \begin{smallmatrix} 0 & 1 \\ 1 & 0 \end{smallmatrix} & \dmat{0} \\
\dmat{0} & \dmat{0}
\end{smallmatrix}\right], \quad t \in \stdset{R},\quad \dmat{Q}\in \stdset{S}_+^d.
\end{equation*}
Let $\lambda_{\min}: \stdset{S}^d \to \stdset{R}$ be the minimum eigenvalue of a Hermitian symmetric matrix (not necessarily positive semi-definite). By definition, $\lambda_{\min}(\tilde{\dmat{Q}}(0)) \geq 0$.  In addition, $\exists t_1 > 0$ such that $\lambda_{\min}(\tilde{\dmat{Q}}(t_1)) < 0$. Furthermore, by the continuity of the map $\lambda_{\min}$ in the matrix entries \cite[Theorem 5.2]{kato2012short}, $\lambda_{\min}(\tilde{\dmat{Q}}(t))$ is a continuous function of $t$. Hence, by the intermediate value theorem, $\exists t_0 \in [0,t_1]$ such that $\lambda_{\min}(\tilde{\dmat{Q}}(t_0)) = 0$, i.e., such that $\tilde{\dmat{Q}}(t_0)$ is positive semi-definite, low rank, and achieves the same rates as $\dmat{Q}$.
\end{proof}

\bibliographystyle{IEEEtran}
\bibliography{IEEEabrv,refs} 

\begin{IEEEbiographynophoto}{Lorenzo Miretti}(Student Member, IEEE) received the BSc degree in telecommunication engineering from Politecnico di Torino, Turin, Italy, in 2015 and the MSc degree in Communications and Computer Networks Engineering from Politecnico di Torino and T\'el\'ecom ParisTech in 2018, both cum laude. He is currently working towards the Ph.D degree with the Department of Communication Systems, EURECOM, Sophia Antipolis, France. He studies the physical layer of wireless networks, signal processing, and multi-user information theory.
\end{IEEEbiographynophoto}

\begin{IEEEbiographynophoto}{Mari Kobayashi}(Senior Member, IEEE) received the B.E. degree in electrical engineering from Keio University, Yokohama, Japan, in 1999, and the M.S. degree in mobile radio and the Ph.D. degree from École Nationale Supérieure des Télécommunications, Paris, France, in 2000 and 2005, respectively. From November 2005 to March 2007, she was a postdoctoral researcher at the Centre Tecnològic de Telecomunicacions de Catalunya, Barcelona, Spain. In May 2007, she joined the Telecommunications department at Centrale Supélec, Gif-sur-Yvette, France, where she is now a professor. She is the recipient of the Newcom++ Best Paper Award in 2010, and IEEE Comsoc/IT Joint Society Paper Award in 2011, and ICC Best Paper Award in 2019. She was an Alexander von Humboldt Experienced Research Fellow (September 2017- April 2019) and an August-Wihelm Scheer Visiting Professor (August 2019-April 2020) at Technical University of Munich (TUM).  
\end{IEEEbiographynophoto}

\begin{IEEEbiographynophoto}{David Gesbert}(Fellow, IEEE) is Professor and Head of the Communication Systems Department, EURECOM. He obtained the Ph.D degree from Ecole Nationale Superieure des Telecommunications, France, in 1997. From 1997 to 1999 he has been with the Information Systems Laboratory, Stanford University. He was then a founding engineer of Iospan Wireless Inc, a Stanford spin off pioneering MIMO-OFDM (now Intel). Before joining EURECOM in 2004, he has been with the Department of Informatics, University of Oslo as an adjunct professor. D. Gesbert has published about 340 papers and 25 patents, some of them winning 2019 ICC Best Paper Award, 2015 IEEE Best Tutorial Paper Award (Communications Society), 2012 SPS Signal Processing Magazine Best Paper Award, 2004 IEEE Best Tutorial Paper Award (Communications Society), 2005 Young Author Best Paper Award for Signal Proc. Society journals, and paper awards at conferences 2011 IEEE SPAWC, 2004 ACM MSWiM. He has been a Technical Program Co-chair for ICC2017. He was named a Thomson-Reuters Highly Cited Researchers in Computer Science. In 2015, he was awarded the ERC Advanced Grant "PERFUME" on the topic of smart device Communications in future wireless networks. He is a Board member for the OpenAirInterface (OAI) Software Alliance. Since early 2019, he heads the Huawei-funded Chair on Adwanced Wireless Systems Towards 6G Networks. He sits on the Advisory Board of Huawei European Research Institute. In 2020, he was awarded funding by the French Interdisciplinary Institute on Artificial Intelligence for a Chair in the area of AI for the future IoT.
\end{IEEEbiographynophoto}

\begin{IEEEbiographynophoto}{Paul de Kerret}(Member, IEEE) has an Engineering degree from both the French Graduate School, IMT Atlantique, and from the Munich University of Technology through a double-degree program. In 2013, he obtained his doctorate from the French Graduate School, Telecom Paris. From 2015 to 2019, he was a Senior Researcher at EURECOM. In September 2019, he joined the Mantu Artificial Intelligence Laboratory to lead the Research Group and he is currently working as Lead Data Scientist within the Startup Company Greenly where he puts his skills to measure and reduce our carbon footprints. He has been involved in several European collaborative projects on mobile communications and co-presented several tutorials at major IEEE international conferences. He has authored more than 30 articles in the IEEE flagship conferences.
\end{IEEEbiographynophoto}
\end{document}